\documentclass[11pt, leqno]{article}
\usepackage[paperwidth=8.5in,paperheight=11in,top=1in, bottom=1in, left=0.75in, right=0.75in]{geometry} 

\usepackage{amsfonts,amsthm,amssymb,amsmath}
\allowdisplaybreaks
\usepackage{xcolor}
\usepackage{tikz}
\usepackage{bm}
\usepackage{epstopdf}
\usepackage{graphicx}

\linespread{1.3}

\usepackage{array}
\usepackage{mathtools}
\mathtoolsset{showonlyrefs=true}
\numberwithin{equation}{section}

\usepackage{multirow}
\usepackage[authoryear]{natbib}

\usepackage{hyperref} % enables and customizes hyperlinks
\hypersetup{
	colorlinks   = true,
	citecolor    = blue,
	linkcolor    = red,
	urlcolor     = magenta
}

\newtheorem{theorem}{Theorem}[section]

\newtheorem{proposition}[theorem]{Proposition}
\newtheorem{remark}[theorem]{Remark}

\newtheorem{problem}[theorem]{Problem}
\newtheorem{definition}[theorem]{Definition}
\newtheorem{corollary}[theorem]{Corollary}

%-----------------------------------------------------------------------------------
%
%          MACROS
%
%-----------------------------------------------------------------------------------

% shortcuts for greek letters --------------------------------------

%\newcommand\eps{\varepsilon}

\newcommand\sig{\sigma}

\newcommand\Lam{\Lambda}
\newcommand\gam{\gamma}

\newcommand\lam{\lambda}
\newcommand\del{\delta}

% mathcal symbols ---------------------------------------------

\newcommand\Ac{\mathcal{A}}

\newcommand\Fc{\mathcal{F}}

% mathbb

\newcommand{\Eb}{\mathbb{E}}
\newcommand{\Fb}{\mathbb{F}}

\newcommand{\Pb}{\mathbb{P}}

\newcommand{\Rb}{\mathbb{R}}

\newcommand{\dd}{\mathrm{d}}

\newcommand{\wh}{\widehat}
\newcommand{\wt}{\widetilde}

% Colors --------------------------------------

\newcommand{\wone}{W^{(1)}}
\newcommand{\wtwo}{W^{(2)}}
\newcommand{\xone}{X^{(1)}}
\newcommand{\dione}{D^{(1)}}

\begin{document}

\title{ A Perturbation Approach to \\Optimal Investment, Liability Ratio, and Dividend Strategies}

\author{Zhuo Jin\thanks{Centre for Actuarial Studies, Department of Economics, University of Melbourne, Australia. Email: zjin@unimelb.edu.au. This author is partially supported by the Research Grants Council of the Hong Kong Special Administrative Region (No. 17330816).}
\and
Zuo Quan Xu\thanks{Department of Applied Mathematics, Hong Kong Polytechnic University, Hong Kong, China. Email: maxu@polyu.edu.hk. 
This author is partially supported by by the National Natural Science Foundation of China  (No. 11971409) and Hong Kong GRF (No. 15204216 and No. 15202817). 
}
\and
Bin Zou\thanks{Corresponding author.   Department of Mathematics, University of Connecticut, 341 Mansfield Road U1009, Storrs, Connecticut 06269-1009, USA. Email: bin.zou@uconn.edu. Phone: +1-860-486-3921. 
This author is partially supported by a start-up grant from the University of Connecticut.}
}

\date{This Version: May 26, 20201\\Forthcoming in \emph{Scandinavian Actuarial Journal} }
\maketitle

\vspace{-1cm}
\begin{abstract}
\normalsize
We study an optimal dividend problem for an insurer who simultaneously controls 
investment weights in a financial market, liability ratio in the insurance business, and dividend payout rate. 
The insurer seeks an optimal strategy to maximize her expected utility of dividend payments over an infinite horizon. 
By applying a perturbation approach, we obtain the optimal strategy and the value function in closed form for log and power utility.  
We conduct an economic analysis to investigate the impact of various model parameters and risk aversion on the insurer's optimal strategy. 
\end{abstract}

\noindent
\textbf{Keywords}: Jump Diffusion; Optimal Dividend; Reinsurance; Stochastic Control

\section{Introduction}
\label{sec:intro}

Stochastic control has enjoyed great success in actuarial science since 1990s; see, e.g., \cite{browne1995optimal} and \cite{asmussen1997controlled} for early contributions and the monograph  \cite{schmidli2007stochastic} 
for a more systematic overview. 
Of particular interest in the application of stochastic control to actuarial problems is the study of optimal dividend strategies for an insurer.
In this paper, we consider an insurer who controls not only dividend payments but also investment and liability strategies simultaneously, and apply a perturbation approach to solve such an optimization problem under the utility maximization criterion. 

The optimal dividend problem is a well-studied topic that, generally speaking, seeks an optimal dividend strategy so as to optimize the insurer's business objective. 
In a seminal paper, \cite{de1957impostazione} models the insurer's risk process $R$ under the classical Crem\'er-Lundberg (CL) setting\footnote{In this setting, the risk process $R=(R_t)_{t \ge 0}$ is modeled by a compound Poisson process $R_t = \sum_{i=1}^{\wh N_t} Y_i$, where $\wh N$ is a homogeneous Poisson process and $(Y_i)_{i=1,2,\cdots}$ is a series of independently and identically distributed random variables, also independent of $\wh N$. Here, $\wh N$ models the claim frequency and $Y_i$ models the severity of the $i$-th claim.} and considers an insurer who seeks to maximize the expected discounted dividend payments until the ruin time $\tau$: 
\begin{align}
	\label{eq:de-obj}
	\max \; \int_0^\tau \, e^{-\delta t} \, \dd \wt D_t,
\end{align}
where $\delta >0$ is the (subjective) discount rate and $\wt D=(\wt D_t)_{t \ge 0}$ denotes the insurer's dividend strategy, with $\wt D_t$ representing the \emph{cumulative} dividend payments up to time $t$. 
Note that the ruin time $\tau$ in \eqref{eq:de-obj} depends on the dividend strategy $\wt D$; let us denote $X=X^{\wt D}$ the insurer's controlled surplus (wealth) process, then $\tau = \tau^{\wt D} := \inf\{t: X^{\wt D} \le 0 \}$.
The optimal dividend strategy $\wt D^*$ to Problem \eqref{eq:de-obj} is often a band strategy and may further reduce to a barrier strategy in certain cases; see  \cite{avanzi2009strategies} for details. 
There is a rich body of literature on the optimal dividend problem by now; see \cite{schmidli2007stochastic}[Sections 2.4 and 2.5] for a standard textbook treatment. 
We refer readers to \cite{albrecher2009optimality} and \cite{avanzi2009strategies} for more comprehensive surveys on this topic. 
In what follows, we conduct a selective literature review on this problem by discussing various extensions to the work of \cite{de1957impostazione} from four different angles: 
the risk model, the type of dividend strategies, additional control components and constraints, and the optimization objective.

First, we discuss extensions to \cite{de1957impostazione} in the modeling of the insurer's risk and (uncontrolled) surplus processes.
When it comes to the modeling of the insurer's risk process $R$, a popular choice is the diffusion model,\footnote{In this setting, the dynamics of the risk process $R=(R_t)_{t \ge 0}$ is given by $\dd R_t = \alpha \dd t + \beta \dd W_t$, where $\alpha, \beta>0$ and $W=(W_t)_{t\ge 0}$ is a standard Brownian motion. Here, the parameters $\alpha$ and $\beta$ can be selected to match the first two moments of $R$ under the CL model; see \cite{browne1995optimal}.} which can be seen as a continuous approximation to the classical CL model.
Please see \cite{jeanblanc1995optimization}, \cite{asmussen1997controlled}, \cite{hojgaard1999controlling}, and \cite{choulli2003diffusion}, among many others for investigations of optimal dividend under the diffusion risk model. 
Further generalizations to the standard diffusion model include incorporating mean-reverting (see \cite{cadenillas2007optimal} and \cite{avanzi2012mean})  and regime switching (see \cite{sotomayor2011classical} and  \cite{jin2013numerical}).
Much of the attention in modeling goes to the risk process $R$, while less is concerned about the premium process $p=(p_t)_{t \ge 0}$. 
In most analyses, $p_t$ is a constant (see, e.g., \cite{taksar1998optimal} and \cite{azcue2005optimal}) or further given by a specific premium principle (see, e.g., \cite{asmussen2000optimal} for the expected value principle). 
But other studies apply a stochastic process (e.g., a diffusion or jump-diffusion process) to model the insurer's surplus without dividend; see, e.g., \cite{choulli2003diffusion}, \cite{gerber2004optimal},  \cite{cadenillas2007optimal}, \cite{chen2014optimal}, and \cite{zhu2015dividend}. 
%Others consider a general premium form that may take into account the time-inhomogeneous feature or the possible dependence between claim frequency and claim severity; see, e.g., \cite{landriault2008constant}.
Opposite to the standard risk model is the so-called dual risk model, where $p$ is now the expense rate and $R$ is interpreted as the positive gains.
Optimal dividend in the dual model is studied in \cite{avanzi2007optimal}, \cite{avanzi2008optimal}, %\cite{yao2011optimal}, 
and \cite{bayraktar2013optimal, bayraktar2014optimal}, among others.

Second, we consider different types of dividend strategies that result in different types of control problems. 
We denote the insurer's dividend strategy by  $\wt D=(\wt D_t)_{t \ge 0}$, where $\wt D_t$ is understood as the cumulative dividend payout up to time $t$.
In the first case, there exists an absolutely continuous and non-negative dividend \emph{rate} $D=(D_t)_{t \ge 0}$ such that $\dd \wt D_t = D_t \dd t$ for all $t \ge 0$; see Case A in \cite{jeanblanc1995optimization}. 
The optimal dividend problem in this case is a classical control problem. 
Often an upper bound on the dividend rate $D$ is imposed (i.e., $0 \le D_t \le m < \infty$) and, in many cases, the optimal dividend strategy is equal to the upper bound if the controlled surplus is large enough, and 0 otherwise. 
The study of optimal dividend in this case can be found in \cite{asmussen1997controlled}, \cite{sotomayor2011classical}, \cite{jin2015optimal}, and \cite{zhu2015dividend}, among others.
In the second case, the intervention (dividend payout) is \emph{not} continuous and a dividend strategy consists of a pair of processes $(T_i, D_i)_{i=1,2,\cdots}$, where $D_i$ is the \emph{amount} of the $i$-th dividend paid at time $T_i$; see Case B in \cite{jeanblanc1995optimization}. 
The control choice in this case is appropriate when paying dividends incurs transaction costs or taxes, and the corresponding problem is an impulse control problem. 
Please refer to \cite{cadenillas2006classical}, \cite{cadenillas2007optimal}, \cite{bai2010optimal}, 
 and \cite{yao2011optimal} 
for related investigations of this case. 
In the third case, a dividend strategy $\wt D$ is a c\`adl\`ag process (right continuous with left limits) that is non-decreasing and non-negative; see Case C in \cite{jeanblanc1995optimization}. 
The optimal dividend problem in this setup often leads to a singular control problem.
\cite{jeanblanc1995optimization} show that when the upper bound $m$ on the dividend rate goes to infinity in the first case or the fixed transaction cost goes to zero in the second case, the corresponding optimal dividend strategy converges to the one in the third case. 
Please consult, e.g., \cite{taksar1998optimal}, \cite{azcue2005optimal}, \cite{AlbrecherHT05}, \cite{bai2010optimal}, and \cite{sotomayor2011classical} for the related study in this case. 

Third, we discuss interesting features, mainly in the control components, that have been incorporated into the original model of \cite{de1957impostazione}.
In \cite{de1957impostazione}, an insurer only controls the dividend payout strategy, and neither invests the surplus in a financial market nor resorts to risk control strategies (e.g., reinsurance policies). 
Much of the following research includes investment and/or risk control strategies into the study of optimal dividend; see, e.g., \cite{taksar1998optimal}, \cite{asmussen2000optimal}, \cite{choulli2003diffusion}, \cite{azcue2005optimal}, and \cite{cadenillas2006classical}  for a short list. 
The insurer's controls and/or actions are often subject to various restrictions. 
An important restriction is transaction costs or taxes on dividend strategies; see \cite{cadenillas2006classical}, \cite{bai2010optimal}, and \cite{schmidli2017capital}. 
Other features/restrictions proposed in the literature include constraints on risk control (see \cite{choulli2003diffusion}), capital injections (see \cite{kulenko2008optimal},  \cite{schmidli2017capital},  \cite{AlbrecherI18}, and \cite{lindensjo2020optimal}), and different credit and debt interest rates  (see \cite{zhu2013optimal}).
The majority of the literature on optimal dividend aggregates the insurer's entire business together in the study, while several recent works consider the insurer as a multi-line business entity, where capital transfers from one business line to another are possible; see \cite{gu2018optimal} and \cite{jin2020optimal}. 

Last, we review the insurer's optimization objectives in the optimal dividend problem. 
The most standard choice is given in \eqref{eq:de-obj}, maximizing the expected discounted dividend payments up to ruin. 
It is known that the ruin probability is 1 under the optimal dividend strategy $\wt D^*$ to Problem \eqref{eq:de-obj}, %i.e., $\Pb(\tau < \infty) =1$ under $\wt D^*$, 
which may be seen as a ``disadvantage for not taking the lifetime of the controlled process into account'' as pointed out in \cite{thonhauser2007dividend}.
Consequently, \cite{thonhauser2007dividend} propose to add a penalty term $-e^{-\delta \tau}$ to the objective function in \eqref{eq:de-obj}; such an idea is similar to the bequest penalty used in  \cite{karatzas1986explicit} to deal with bankruptcy in optimal investment and consumption problems. 
In \eqref{eq:de-obj}, the discounting function is of exponential form, which leads to a time-consistent control problem. 
However, different discounting functions are proposed in the literature that may cause time-inconsistency; see a piece-wise exponential discounting in \cite{chen2014optimal}, a general discounting (within a discrete framework) in \cite{zhou2020optimal}, and quasi-hyperbolic discounting in \cite{zhu2020singular}.
Another important alternative objective is rooted from the optimal investment literature (see \cite{merton1969lifetime, merton1971optimum}) and aims to maximize the expected \emph{utility} of dividend payout; see, e.g., \cite{cadenillas2007optimal}, \cite{grandits2007optimal},  \cite{thonhauser2011optimal}, \cite{jin2015optimal}, and \cite{xu2020optimal}. 
In fact, in an early article \cite{gerber2004optimal}[Sections 9 and 10], the authors point out that ``maximizing the expected utility of the present value of the
dividends until ruin is a new and challenging problem.''

In this paper, we formulate a stochastic control problem for an insurer who chooses a triplet control consisting of investment, liability ratio, and dividend strategies in a combined financial and insurance market. 
We model the insurer's (unit) risk process $R$ 
by a diffusion process modulated with jumps, which is correlated with the risky asset price.
In our framework, the insurer can directly control the liability units as measured by the liability ratio (defined as the ratio of the total liabilities to the controlled surplus). 
The goal of the insurer is to seek an optimal investment, liability ratio, and dividend strategy to maximize her expected utility of dividend payments over an infinite horizon: $\max \, \Eb [ \int_0^\infty \, e^{-\delta t} U(D_t) \, \dd t]$, 
where $\delta >0$ is the discount rate, $D=(D_t)_{t \ge 0}$ denotes the dividend rate, and the utility function $U$ belongs to the hyperbolic absolute risk aversion (HARA) utility family (including log utility and power utility).
%see, e.g., \cite{cadenillas2007optimal} and \cite{thonhauser2011optimal} for a similar objective choice.
To solve such a control problem, we apply a perturbation approach, proposed in \cite{herdegen2020elementary},  that is different from the standard Hamilton-Jacobi-Bellman (HJB) and martingale approaches.
% (see \cite{fleming2006controlled} and \cite{karatzas1998methods}). 
In the first step, we solve the problem for a restricted class of strategies--constant strategies and obtain the optimal constant strategy $u_c^*$ and the corresponding smooth value function $V_c$ in 
closed form.
In the second step, for any admissible strategy, we perturb it by an $\epsilon$-optimal constant strategy\footnote{Such a strategy is the same as the optimal constant strategy $u_c^*$ and starts with an initial wealth $\epsilon>0$.} and show that the value function for the perturbed problem $V_\epsilon(x)$ is equal to $V_c(x + \epsilon)$, and  the optimal constant strategy $u_c^*$ remains optimal among all admissible strategies.

We next summarize the main contributions of this paper as follows. 
In the literature, the HJB approach is the dominate choice to optimal dividend, and, to the best of our knowledge, the perturbation approach developed in \cite{herdegen2020elementary} has not been applied to solve the optimal dividend problem before. 
The perturbation approach is particularly useful in dealing with the HARA utility maximization criterion adopted in this paper and applies simultaneously for both the cases of $0 <\eta \le 1$ and $\eta >1$, where $\eta$ is the relative risk aversion. 
In comparison, the related papers of \cite{cadenillas2007optimal}, \cite{thonhauser2011optimal}, \cite{jin2015optimal}, and \cite{xu2020optimal} only analyze the case of $0 < \eta \le 1$ using the HJB approach. The case of $\eta >1$ often requires additional regularity assumptions and involved analysis under the HJB approach; see Section \ref{sub:comp} for a detailed comparison between our approach and the HJB approach. 
\cite{herdegen2020elementary} consider the Merton problem (optimal consumption/investment) and focus more on the technical side, while we study the optimal dividend problem and aim to obtain explicit solutions and interpret the results from the economic point of view.
The optimal dividend strategy is to pay dividends at a constant proportion of the insurer's wealth, due to the choice of HARA utility, and the corresponding wealth process is always strictly positive. 
Put differently, our optimal dividend strategy provides a stable way of distributing profits to the shareholders and always maintains solvency for the insurer.
In addition, we conduct an extensive economic analysis in Section \ref{sec:econ}, both analytically and numerically, to investigate how various model parameters and the insurer's risk aversion affect the optimal strategies.

The rest of the paper is organized as follows. 
In Section \ref{sec:model}, we introduce the market model and formulate the main stochastic control problem of the paper. 
We first solve the problem for a restricted class of strategies--constant strategies in Section \ref{sec:cont} and next verify the optimal constant strategy is also globally optimal among all the unrestricted admissible strategies in Section \ref{sec:veri}. 
In Section \ref{sec:econ}, we conduct a thorough economic analysis along both analytic and numerical directions to investigate the impact of various parameters on the optimal strategy. 
Finally, we conclude in Section \ref{sec:con}.

\section{Problem Formulation}
\label{sec:model}

\subsection{The Markets}
\label{sub:market}

We consider a representative insurer (``she"), who has access to a financial market consisting of a risk-free asset and a risky asset (e.g., a stock index or a mutual fund).  
The risk-free asset earns at a constant rate $r$ continuously.
The price process $S=(S_t)_{t \ge 0}$ of the risky asset is modeled by 
%$\dd S_t = \mu S_t \, \dd t + \sig S_t \, \dd \wone_t$, 
\begin{align}
\label{eq:dS}
\dd S_t = \mu S_t \, \dd t + \sig S_t \, \dd \wone_t,
\end{align}
where $\mu, \sig>0$ and $\wone$ is a standard Brownian motion. 
We assume the underlying financial market is ideal and frictionless.

The insurer's main business is to underwrite policies against insurable risks. Following \cite{zou2014optimal}, we model such risks on a unit basis (e.g., per policy) by 
\begin{align}
\label{eq:dR}
	\dd R_t = \alpha \, \dd t + \beta \rho \, \dd \wone_t + \beta \sqrt{1 - \rho^2} \, \dd \wtwo_t + \gamma \, \dd N_t,
\end{align}
where $\alpha, \beta, \gamma >0$, $-1 < \rho < 1$, $\wtwo$ is another standard Brownian motion, and $N$ is a homogeneous Poisson process with constant intensity $\lambda>0$. 
Let $p$ represent the \emph{unit} premium rate, corresponding to the unit risk process $R$ in \eqref{eq:dR}. 
We further impose the following assumptions on the model \eqref{eq:dS}-\eqref{eq:dR} throughout the rest of this paper:
\begin{align}
	\label{eq:assu}
	\mu > r  \qquad \text{and} \qquad p > \alpha + \lam \gamma.
	%, \qquad \sig, \alpha, \beta, \gamma >0, \qquad -1 < \rho < 1.
\end{align}
We discuss the economic interpretations of the above assumptions in \eqref{eq:assu} in Remark \ref{rem:model}.
On the technical level, we assume the processes $\wone$, $\wtwo$, and $N$ are independent under a complete probability space $(\Omega, \Fc, \Pb)$ and the filtration $\Fb:=(\Fc_t)_{t \ge 0}$ is generated by these three processes, augmented with $\Pb$-null sets. 

\begin{remark}
	\label{rem:model}
In the above described financial market, we model the risky asset by the classical Black-Scholes model \eqref{eq:dS}, which is a dominant choice in the study of optimal investment problems due to trackability; see \cite{merton1969lifetime, merton1971optimum}. 
In the literature, a standard approach for modeling the risk process is the Crem\'er-Lundberg (CL) model, while an equally popular alternative is the diffusion model, which can be seen as a continuous approximation of the CL model (a compound Poisson process). Please refer to \cite{browne1995optimal} and \cite{hojgaard1998optimal} for early applications of the diffusion risk model in actuarial science. Here our risk model \eqref{eq:dR} is a further generalization of both the CL model and the diffusion model.
Indeed, if we set $\alpha = \beta =0$ and allow $\gam$ to be a random variable, then \eqref{eq:dR} becomes the classical CL model.
On the other hand,  by setting $\rho = \gam = 0$ in \eqref{eq:dR}, our model is reduced to the standard diffusion model considered in \cite{browne1995optimal} and \cite{asmussen1997controlled}. 
Our first generalization in \eqref{eq:dR} is to allow the risk process from insurance underwriting to be correlated with the risky asset in the financial market.	
As argued in \cite{stein2012stochastic}[Chapter 6], a major mistake in the AIG's business operation during the financial crisis of 2007-2008 is ignore or underestimate the \emph{negative} correlation between the financial market and its insurance liabilities. Such a negative correlation can be easily captured by our risk model  \eqref{eq:dR} by setting $\rho \in (-1, 0)$; see, e.g., \cite{zou2014optimal} and \cite{jin2015optimal}. 
Second, we incorporate jumps into the modeling of the risk process $R$. Here jumps can help capture sudden extreme claims, which may not be modeled by a continuous process such as a Brownian motion. We refer readers to \cite{shen2020mean} for more detailed discussions on the risk model \eqref{eq:dR}.

The two assumptions in \eqref{eq:assu} both have reasonable economic meanings. 
The first condition implies that the Sharpe ratio of the risky asset, $\Lam := \frac{\mu - r}{\sig}$, is strictly positive, which holds true for all ``good" assets (e.g., stock indexes) over long run in the financial market.
Since $\Eb[\dd R_t] = (\alpha + \lam \gamma) \dd t$, the second condition implies that the insurer should charge the unit premium rate $p$ greater than $\alpha + \lam \gamma$, the so-called ``actuarial fair price".
Otherwise (i.e., if $p \le \alpha + \lam \gamma$),  the insurer's ruin probability  (without investment) is equal to 1. 
If we apply the expected value principle to calculate insurance premium (i.e., $p = (1+\theta) \, (\alpha + \lam \gamma)$), then assuming $p > \alpha + \lam \gam$ is equivalent to setting $\theta>0$, where $\theta$ is often called the loading factor. 
Note that we do \emph{not} assume that $p$ is given by the above expected value principle, which is nevertheless imposed in many works in order to obtain explicit solutions; see, e.g., \cite{asmussen2000optimal}.
\end{remark}

\subsection{The Insurer's Strategies and Wealth Process}

The representative insurer chooses a triplet strategy (control) that consists of an investment strategy, a liability ratio strategy, and a dividend strategy in the business operations, as described in what follows.

\begin{itemize}
	\item In the financial market, the insurer chooses a dynamic investment strategy $\pi = (\pi_t)_{t \ge 0}$, where $\pi_t$ denotes the \emph{proportion} of wealth invested in the risky asset at time $t$. 
	
	\item In the insurance market, we assume the insurer can directly control the amount of liabilities (number of units or policies) in the underwriting, denoted by $L = (L_t)_{t \ge 0}$. 
	Following \cite{stein2012stochastic}, we define  $\kappa_t := L_t / X_t$, where $X_t$ is the insurer's wealth at time $t$ (defined later in \eqref{eq:dX}), and call $\kappa = (\kappa_t)_{t \ge 0}$ the insurer's \emph{liability ratio} strategy.
	
	\item The insurer chooses a dividend strategy  $D = (D_t)_{t \ge 0}$ to distribute profits to the shareholders, where $D_t$ denotes the continuous dividend \emph{rate} payable at time $t$. 
	Namely, the dividend payment over $[t, t + \dd t]$ is given by $D_t \dd t$.  
\end{itemize}
We denote the insurer's strategy by $u := (\pi, \kappa, D)$. 
We offer some explanations to the insurer's liability strategy $\kappa$ and dividend strategy $D$ in the following remark.

\begin{remark}
	\label{rem:control}
Motivated by the AIG case during the financial crisis of 2007-2008, \cite{stein2012stochastic} finds that 
the liability ratio provides an early warning signal to the AIG's failure and sets up a model in which the insurer directly controls its liability ratio.
Such a setup attracts considerable attention in the actuarial science literature, as evidenced by a series of follow-up works on the optimal control study for an insurer; see \cite{zou2014optimal}, \cite{jin2015optimal},  and \cite{shen2020mean}, among many others. 
From an economic point of view, the assumption that the insurer can dynamically control the liability ratio is not unreasonable, as major insurers have monopoly power in the insurance market and may also  ``discriminate" policy holders.
For instance, in the business of health insurance, insurers often reject potential policy applications based on certain risk factors. 
\cite{lapham1996genetic} study a group of participants with genetic disorders in the family and find that 25\% of them believed they were refused life insurance and 22\% were refused health insurance. 
A recent work of \cite{bernard2020optimal} confirms that, under certain circumstances, ``it may become optimal for the insurer to refuse to sell insurance to some prospects". 

The dividend strategy $D$ described above is absolutely continuous with respect to the Lebesgue measure, which is used in many related works; see \cite{asmussen1997controlled},  \cite{avanzi2012mean}, and \cite{jin2015optimal} for a short list. 
As such, the corresponding control problem is a classical one, instead of a singular or impulse one.  
We refer to \cite{jeanblanc1995optimization} and \cite{sotomayor2011classical} for both classical and singular control of optimal dividend problems. 
As argued in \cite{avanzi2012mean}, optimal dividend strategies obtained in the literature are often volatile (e.g., ``bang-bang'' strategies), which are unlikely to be adopted by managers.
In consequence, they consider a setup where dividends are paid at a constant rate $g$ of the company's surplus and seek to find the optimal rate $g^*$. 
Note that the above dividend strategy $D$ includes the threshold strategies (also called refracting strategies), where $D_t = \text{constant}$ when the wealth (surplus) at time $t$ is greater than a given threshold and $D_t = 0$ otherwise.
For analyses on the threshold-type dividend strategies, please see  \cite{albrecher2009optimality} and \cite{avanzi2009strategies}.
%, and the recent article of \cite{albrecher2018dividends}.
% and \cite{renaud2020stochastic}.
\end{remark}

Let us denote by $X=(X_t)_{t \ge 0}$ the insurer's wealth  process (also called surplus process) under a triplet strategy of investment, liability ratio, and dividend  $u := (\pi, \kappa, D)$. 
The dynamics of $X$ is obtained by 

\begin{align}
\label{eq:dX}
\dd X_t &= \big( X_{t-} (r + (\mu - r) \pi_t + (p - \alpha) \kappa_t) - D_t \big) \, \dd t 
+ (\sig \pi_t - \beta \rho \kappa_t) X_{t-} \, \dd \wone_t \\
&\quad - \beta \sqrt{1 - \rho^2} \, \kappa_t X_{t-} \, \dd \wtwo_t - \gam \kappa_t X_{t-} \, \dd N_t, \qquad   X_0 = x > 0.
\end{align}
It is clear from \eqref{eq:dX} that the insurer's wealth process depends on her initial wealth $x$ and strategy $u$, and we write it as $X$ instead of $X^{x,u}$ for 
%national 
notational simplicity. 

\begin{remark}
We explain in this remark how we derive the dynamics of the insurer's wealth process $X$ in \eqref{eq:dX}. 
Let us decompose $\dd X_t$ into three components: $\dd X_t = \Delta_t^i + \Delta_t^l + \Delta_t^d$, where $\Delta_t^i$ (resp. $\Delta_t^l$ and $\Delta_t^d$) denotes the wealth changes due to investment (resp. insurance liabilities and dividend payments).	
Given an investment strategy $\pi$, the amount of $\pi X$ is invested in the risky asset while the remaining $(1 - \pi) X$ is invested in the risk-free asset, implying $\Delta_t^i = \frac{\pi_t X_{t-}}{S_t} \, \dd S_t + r (1 - \pi_t) X_{t-} \, \dd t = X_{t-} (r + (\mu - r)\pi_t) \, \dd t + \sigma \pi_t X_{t-} \, \dd W_t^{(1)}$.
Given a liability ratio strategy $\kappa$, the insurer's total liabilities $L$ from underwriting insurance policies is given by $L = \kappa X$. Recall from Section \ref{sub:market} that the risk $R$ and the premium rate per unit liabilities are given respectively by \eqref{eq:dR} and $p$. As such, we have $\Delta_t^l = L_t (p \, \dd t - \dd R_t) = \kappa_t X_{t-} \big[ (p - \alpha)  \, \dd t - \beta \rho \, \dd W_t^{(1)} - \beta \sqrt{1-\rho^2} \, \dd W_t^{(2)} - \gamma \, \dd N_t \big] $. For a given dividend strategy $D$, it is easily seen that $\Delta_t^d = - D_t \, \dd t$. Therefore, upon combining the above results, we obtain the dynamics $\dd X$ as in \eqref{eq:dX}.
\end{remark}

We next define the admissible strategies in Definition \ref{def:ad} and close this subsection with some remarks. 

\begin{definition}[Admissible Strategies]
	\label{def:ad}
	A strategy $u=(\pi, \kappa, D)$ is called admissible, denoted by $u \in \Ac$, if (1) $u$ is predictable with respect to the filtration $\Fb$; 
	(2) for all $t \ge 0$, we have $\int_0^t \, \pi_s^2 \, \dd s < \infty$, $0 \le \kappa_t < \frac{1}{\gam}$, and $\Eb[\int_0^t \, D_s \, \dd s]<\infty$ with $D_t \ge 0$;
	(3) there exists a unique strong solution to \eqref{eq:dX} such that $X_t \ge 0$ for all $t \ge 0$.
	%; and (4) the functional $J$ in \eqref{eq:J} is well defined.
\end{definition}

\begin{remark}
\label{rem:ad}
The constraint $0 \le \kappa_t < \frac{1}{\gam}$ stipulates that the wealth process $X$ will not become negative or zero at the jump times of the Poisson process $N$. 
We also mention that the condition on the investment strategy $\pi$ is weaker than the standard square integrability condition $\Eb [\int_0^t \, \pi_s^2 \, \dd s ] < \infty$ for all $t \ge 0$.
\end{remark}

\subsection{The Problem}

Given a triplet strategy $u$, we define the insurer's objective functional $J$ by 
%(see \cite{jin2015optimal})
\begin{align}
\label{eq:J}
J(x; u) := \Eb \left[\int_0^\infty \, e^{-\delta t} \, U(D_t) \, \dd t\right],
\end{align}
where $\Eb$ denotes taking expectation under the physical measure $\Pb$, $\delta>0$ is the subjective discount factor, and $U$ is a standard utility function (increasing and concave).  
The concavity of $U$ implies risk aversion in the insurer's decision making.
Applying a concave utility function to the dividend payments is in fact not unusual in the optimal dividend literature. 
\cite{cadenillas2007optimal} provide economic justifications for incorporating risk aversion in the objective (see pp.85 therein); see also \cite{thonhauser2011optimal}, \cite{jin2015optimal}, and \cite{xu2020optimal}. 

\begin{remark}
We add a technical remark to address the possible ruin and its impact on the objective $J$ in \eqref{eq:J}.
To that end, for any admissible strategy $u \in \Ac$, define the ruin time $\tau^u$ by 
\begin{align}
	\tau^u := \inf\{t \ge 0: \, X \le 0\}, \qquad u \in \Ac,
\end{align}
where the insurer's wealth process $X$ is defined in \eqref{eq:dX}. 
After ruin, we set the utility derived from dividends to be a negative constant $P$, i.e., $U(D_t) \equiv P$ when $t \ge \tau^u$. 
Note that given a utility function $U$ (which is increasing and concave), $(U+ \, \text{constant})$ remains a utility function. 
Therefore, the introduction of $P$ for the utility after the ruin time $\tau^u$ does not destroy monotonicity or concavity (upon a proper shift).

We then rewrite $J$ in \eqref{eq:J} as 
\begin{align}
	\label{eq:J-new}
	J = \Eb \left[\int_0^{\tau^u} \, e^{-\delta t} \, U(D_t) \, \dd t + \int_{\tau^u}^\infty \, e^{-\delta t} \, U(D_t) \, \dd t\right] = \Eb \left[\int_0^{\tau^u} \, e^{-\delta t} \, U(D_t) \, \dd t + \frac{P}{\delta} \, e^{-\delta \tau^u} \right].
\end{align}
The negative constant $P$ in \eqref{eq:J-new} can be interpreted as a free parameter penalizing early ruin. 
The same penalty term is also adopted in \cite{thonhauser2007dividend} in order to take into account the lifetime of the controlled process; see also \cite{karatzas1986explicit} for a similar idea. 
If we let $P \to 0$, the objective in \eqref{eq:J-new} reduces to \eqref{eq:de-obj}, the standard choice in the literature; see \cite{albrecher2009optimality} and \cite{avanzi2009strategies}. 
In other words, early ruin is \emph{not} penalized in the \cite{de1957impostazione} model. 
If we let $P \to -\infty$, i.e., a finite ruin is penalized infinitely, then the insurer will adopt a strategy that does not result in ruin ($\tau^u = +\infty$). 
Our later results show that the insurer's wealth process is strictly positive under the optimal strategy, hence the choice of $P$ is irrelevant in this paper. The purpose of rewriting $J$ in \eqref{eq:J-new} is to relate our objective functional $J$ to the classical one given by \eqref{eq:de-obj}.
\end{remark}

We now formulate the main optimal investment, liability ratio, and dividend problem of this paper.

\begin{problem}
	\label{prob:main}
The insurer seeks an optimal strategy $u^*=(\pi^*, \kappa^*, D^*)$
%=(\pi^*, \kappa^*, D^*)$ 
to maximize the expected discounted utility of dividend over an infinite horizon. 
Equivalently, the insurer solves the following stochastic control problem:
\begin{align}
	\label{eq:prob}
	V(x) := \sup_{u \in \Ac} \; J(x; u) = J(x; u^*),
\end{align}
where the admissible set $\Ac$ is defined in Definition \ref{def:ad} and the objective functional $J$ is defined in \eqref{eq:J}.
\end{problem}

In this work, we consider the utility function $U$  of  the hyperbolic absolute risk aversion (HARA) family. In particular, we assume $U$ is given by 
\begin{align}
\label{eq:U}
%U(x) = \ln(x) \qquad \text{or} \qquad 
U(x) = \frac{1}{1 - \eta} \, x^{1 - \eta},  
%\quad 0 <\eta <1 \text{ or } \eta > 1,
\qquad 
\eta > 0,
\end{align}
where the limit case of $\eta = 1$ is treated as log utility $U(x) = \ln x$.
Note that $U$ in \eqref{eq:U} is well defined for all $x>0$. 
If $U$ is given by log utility ($\eta  = 1$) or negative power utility ($\eta > 1$), we set $U(0) = -\infty$. 
The choice of power utility is dominant in the optimal investment literature (see the classical papers of \cite{merton1969lifetime, merton1971optimum}).
We comment that the relative risk aversion $\eta$ in \eqref{eq:U} can be any positive number, and is in particular allowed to be greater than 1, arguably the case in real life; see \cite{meyer2005relative}.
In comparison,  $ 0 < \eta \le 1$ is assumed in \cite{cadenillas2007optimal}, \cite{thonhauser2011optimal}, \cite{jin2015optimal}, and \cite{xu2020optimal}.

\section{Analysis of Constant Strategies}
\label{sec:cont}

In this section, we study Problem \eqref{eq:prob} over a restricted set of \emph{constant} strategies $\Ac_c$, which is defined by
\begin{align}
\label{eq:ad_cont}
\Ac_c := \{ u = (\pi, \kappa, D) \, | \, \pi_t \equiv \pi_c, \, \kappa_t \equiv \kappa_c, \, D_t \equiv \xi_c X_{t}, \; \text{ where } \pi_c, \kappa_c, \xi_c \in \Rb, \, \kappa_c \in [0, 1/\gam), \, \xi_c \ge 0 \}.
\end{align}
We denote constant strategies in $\Ac_c$ by $u_c :=(\pi_c, \kappa_c, \xi_c)$, which is slightly different from $u = (\pi, \kappa, D) \in \Ac$ introduced in Section \ref{sec:intro}.
For any $u_c\in \Ac_c$, there exists a unique strong (positive) solution $X$ to
% the stochastic differential equation (SDE)  
\eqref{eq:dX} and $\Eb[X_t^2] < \infty$ for all $t \ge 0$; see, e.g., Theorem 1.19 in \cite{oksendal2005applied}. 
This result, along with the definition in \eqref{eq:ad_cont}, implies that all the conditions in Definition \ref{def:ad} are satisfied.
%, and the fourth condition holds under some technical assumptions on the model parameters (to be imposed in Proposition \ref{prop:log} or \ref{prop:pow_cont}).  
Therefore, we conclude that the set of constant strategies $\Ac_c$ is a (proper) subset of the set of admissible strategies $\Ac$.

To begin, we solve \eqref{eq:dX} explicitly and  obtain the insurer's wealth $X_t$ at time $t$ (for all $t \ge 0$)  by
\begin{align}
\label{eq:X_cont}
X_t = x \, \exp\left( \big({f}(\pi_c, \kappa_c) - \xi_c \big)t + \big(\sig \pi_c - \beta \rho \kappa_c \big) \wone_t - \beta \sqrt{1 - \rho^2} \kappa_c \wtwo_t + \ln(1 - \gam \kappa_c) \, \wt{N}_t  \right),
\end{align}
where $\wt N = (\wt N_t)_{t \ge 0}$,  with $\wt{N}_t := N_t - \lam t$, is the compensated Poisson process. In addition, the function $f$ is  defined over $\Rb \times [0, 1/\gam)$ by 
\begin{align}
\label{eq:f}
f(y_1, y_2) := r + (\mu - r) y_1 + (p - \alpha) y_2 - \frac{1}{2} \big( \sig^2 y_1^2
- 2 \beta \rho \sig y_1 y_2 +  \beta^2 y_2^2 \big) + \lam \ln(1 - \gam y_2).
\end{align}
For future convenience, we define three constants $A$, $B$, and $C$ by 
\begin{align}
	\label{eq:ABC}
	A:=  \gam \beta^2 ( 1- \rho^2), \qquad B:=  \beta^2(1-\rho^2) + \gam (p - \alpha + \beta \rho \Lam), \qquad 
	C:= p - \alpha + \beta \rho \Lam - \lam \gam,
\end{align}
where $\Lam$ is the Sharpe ratio of the risky asset, i.e., 
\begin{align}
\label{eq:Lam}
\Lam := \frac{\mu - r}{\sig}.
\end{align}
Due to $\rho \in (-1,1)$ and \eqref{eq:assu}, we have $A > 0$ and $B > 0$. 
If $\rho \ge 0$, then $C > 0$; but if $\rho < 0$, $C$ may be negative.

We first analyze the case of log utility $U(x) = \ln x$, corresponding to $\eta = 1$ in \eqref{eq:U}. 
Note that for an admissible constant strategy $u_c \in \Ac_c$, the corresponding wealth $X_t > 0$ for all $X(0)=x>0$ as seen from \eqref{eq:X_cont}. 
Using \eqref{eq:ad_cont}-\eqref{eq:X_cont} along with the definition of $J$ in \eqref{eq:J}, we obtain
\begin{align}
	\label{eq:log_1}
	\Eb \left[\int_0^\infty \, e^{-\delta t} U(D_t) \, \dd t \right] 
	%= \int_0^\infty \, e^{-\delta t} \big(\ln x + f(\pi_c, \kappa_c) t + \ln \xi_c - \xi_c t\big) \dd t
	=\frac{1}{\del} \ln x + \frac{1}{\del} \ln \xi_c + \frac{1}{\del^2} f(\pi_c, \kappa_c)  - \frac{1}{\del^2} \xi_c.
\end{align}
Solving Problem \eqref{eq:prob} over $\Ac_c$ when $U(x) = \ln x$ is now equivalent to maximizing the right hand side of \eqref{eq:log_1}, which is solved in the proposition below. 

\begin{proposition}
	\label{prop:log}
	Suppose $U(x) = \ln x$ and the constant $C$ defined in \eqref{eq:ABC} is non-negative. 
	The optimal constant strategy $u_c^*= (\pi_c^*, \kappa_c^*, \xi_c^*)$ to Problem \eqref{eq:prob} over $\Ac_c$ is given by 
	\begin{align}
		\label{eq:log_cont}
	\pi_c^*= \frac{\mu - r}{\sig^2} + \frac{\rho \beta}{\sig} \, \kappa_c^*, \qquad 
	\kappa_c^* = \frac{B - \sqrt{B^2 - 4AC}}{2A}, \qquad 
	\xi_c^* = \delta,
	\end{align}
where the constants $A$, $B$, and $C$ are defined in \eqref{eq:ABC}.
Furthermore,  the value function $V_c$ is obtained by
	\begin{align}
		\label{eq:log_cont_value}
		V_c(x) := \sup_{u_c \in \Ac_c} \Eb \left[\int_0^\infty \, e^{-\delta t} \ln(D_t) \, \dd t \right] = J(x; u_c^*) = \frac{1}{\delta} \ln (\delta x) + \frac{1}{\delta^2} (\mathfrak{f}^* - \delta),
	\end{align} 
	where $\mathfrak{f}^* :=f(\pi_c^*, \kappa_c^*) = \max %\limits_{(y_1, y_2) \in \Rb \times [0, 1/\gam)}
	 \, f(y_1, y_2)$, with $f$ defined in \eqref{eq:f}, and $\pi_c^*$ and $\kappa_c^*$ derived in \eqref{eq:log_cont}.
\end{proposition}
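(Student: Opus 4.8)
The plan is to exploit the fact that the quantity to be maximized, namely the right-hand side of \eqref{eq:log_1}, is \emph{additively separated} in its decision variables: it equals the constant $\tfrac{1}{\delta}\ln x$, plus a term $\tfrac{1}{\delta}\ln\xi_c-\tfrac{1}{\delta^2}\xi_c$ involving only the dividend coefficient, plus a term $\tfrac{1}{\delta^2}f(\pi_c,\kappa_c)$ involving only $(\pi_c,\kappa_c)$. Hence the maximization over $\Ac_c$ decouples into two independent problems. First I would dispose of the dividend piece: on $(0,\infty)$ the map $\xi_c\mapsto \tfrac{1}{\delta}\ln\xi_c-\tfrac{1}{\delta^2}\xi_c$ is strictly concave with derivative $\tfrac{1}{\delta\xi_c}-\tfrac{1}{\delta^2}$, which vanishes exactly at $\xi_c=\delta$; since the value at $\xi_c=0$ is $-\infty$, the unique maximizer on $[0,\infty)$ is $\xi_c^*=\delta$, contributing $\tfrac{1}{\delta}\ln\delta-\tfrac{1}{\delta}$ to the optimum.

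Next I would maximize $f$ over $\Rb\times[0,1/\gam)$. For each fixed $\kappa_c=y_2$, the map $y_1\mapsto f(y_1,y_2)$ is a strictly concave quadratic (leading coefficient $-\tfrac12\sigma^2<0$), so its unique maximizer solves $\partial_{y_1}f=(\mu-r)-\sigma^2 y_1+\beta\rho\sigma y_2=0$, i.e. $y_1=\tfrac{\mu-r}{\sigma^2}+\tfrac{\rho\beta}{\sigma}y_2$ --- precisely the asserted relation between $\pi_c^*$ and $\kappa_c^*$. Substituting this back reduces the problem to maximizing the one-variable function $\wt{f}(y_2):=r+\tfrac12\Lam^2+(p-\alpha+\beta\rho\Lam)y_2-\tfrac12\beta^2(1-\rho^2)y_2^2+\lam\ln(1-\gam y_2)$ over $y_2\in[0,1/\gam)$, since $\max_{(y_1,y_2)}f=\max_{y_2}\wt{f}(y_2)$. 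A short differentiation and rearrangement using the definitions in \eqref{eq:ABC} gives $\wt{f}'(y_2)=\dfrac{A y_2^2-B y_2+C}{1-\gam y_2}$.

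It then remains to locate the maximizer of $\wt{f}$. Writing $g(y):=Ay^2-By+C$, the denominator of $\wt{f}'$ is positive on $[0,1/\gam)$, so $\wt{f}'$ has the sign of $g$. The key computation is that $g(0)=C\ge 0$ by hypothesis, while $g(1/\gam)=-\lam\gam<0$, which follows by inserting $y=1/\gam$ and simplifying with $A=\gam\beta^2(1-\rho^2)$ and $B=\beta^2(1-\rho^2)+\gam(p-\alpha+\beta\rho\Lam)$. Since $g$ is an upward parabola taking a nonnegative value at $0$ and a negative value at $1/\gam$, it has two distinct real roots $y_\pm=\tfrac{B\pm\sqrt{B^2-4AC}}{2A}$ with $0\le y_-<1/\gam<y_+$; consequently $\wt{f}'\ge 0$ on $[0,y_-]$ and $\wt{f}'<0$ on $(y_-,1/\gam)$, so $\wt{f}$ attains its maximum at $y_-$. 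Thus $\kappa_c^*=y_-=\tfrac{B-\sqrt{B^2-4AC}}{2A}$, and $\pi_c^*$ is recovered from the first-order relation. Finally, substituting $\xi_c^*=\delta$ and $f(\pi_c^*,\kappa_c^*)=\mathfrak{f}^*$ into \eqref{eq:log_1} yields $V_c(x)=\tfrac{1}{\delta}\ln(\delta x)+\tfrac{1}{\delta^2}(\mathfrak{f}^*-\delta)$.

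The main obstacle I anticipate is this last step --- certifying that the \emph{smaller} root $y_-$ is the one lying in the admissible interval $[0,1/\gam)$ and is a maximizer rather than $y_+$. Everything there rests on the two sign facts $g(0)\ge 0$ and $g(1/\gam)=-\lam\gam<0$, and it is precisely here that the hypothesis $C\ge 0$ and the exact form of the constants $A,B,C$ enter; by contrast, the separation argument, the concavity in $y_1$, and the algebra leading to the expression for $\wt{f}'$ are routine.
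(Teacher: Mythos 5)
Your proposal is correct and follows essentially the same route as the paper: apply first-order conditions to the separable right-hand side of \eqref{eq:log_1}, reduce the $\kappa_c$-optimization to the quadratic $Ay^2-By+C=0$, and select the smaller root using $g(0)=C\ge 0$ and $g(1/\gam)=-\lam\gam<0$. Your sign analysis of $\wt{f}'=\frac{Ay_2^2-By_2+C}{1-\gam y_2}$, which certifies unimodality of $\wt f$ on $[0,1/\gam)$ directly, is a cleaner and more complete substitute for the paper's brief appeal to ``verifying the second-order condition,'' but it is a refinement of the same argument rather than a different one.
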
 

\begin{proof}
By applying the first-order condition to maximizing the right hand side of \eqref{eq:log_1}, we obtain that $\pi_c^*$ and $\xi_c^*$ are given by \eqref{eq:log_cont} and $\kappa_c^*$ solves a quadratic equation $A \, y^2 - B \, y + C = 0$, where the constants $A$, $B$, and $C$ are defined in \eqref{eq:ABC}. We compute $B^2 - 4AC= \big( \beta^2 ( 1-\rho^2) - \gam (\lam \gam + C)\big)^2 + 4 \lam B^2 \gam^2 (1-\rho^2) >0$, implying that there are two candidate solutions to the quadratic equation. 
We next verify that the bigger solution is greater than $1/\gam$ and the smaller solution is less than $1/\gam$. 
Consequently, $\kappa_c^*$ is given by the smaller solution as shown in \eqref{eq:log_cont}, which is non-negative if and only if $C \ge 0$.
By verifying the second-order condition, we confirm that $u_c^*$ in \eqref{eq:log_cont} is indeed an optimal constant strategy over $\Ac_c$ to Problem \eqref{eq:prob}. Plugging $u_c^*$ back into $J$ in \eqref{eq:J}, after tedious computations, leads to the value function in \eqref{eq:log_cont_value}.
\end{proof}

We next study the case of power utility $U(x) = x^{1-\eta} / (1 - \eta)$, where $\eta >0$ and $\eta \neq 1$. For any $u_c \in \Ac_c$, we obtain 
\begin{align}
\label{eq:pow_J_cont}
	J(x; u_c) = \Eb \left[\int_0^\infty \, e^{-\delta t} \frac{(\xi_c X_t)^{1-\eta}}{1-\eta} \, \dd t \right] = \frac{x^{1-\eta}}{1 - \eta} \, \frac{\xi_c^{1-\eta}}{\delta - (1-\eta) \cdot g(\pi_c, \kappa_c) +(1-\eta) \xi_c },
\end{align}
provided $\delta - (1-\eta) \cdot g(\pi_c, \kappa_c) +(1-\eta) \xi_c  > 0$.
Here, the function $g$ is defined over $\Rb \times [0, 1/\gam)$ by 
\begin{align}
\label{eq:g} \qquad 
g(y_1, y_2) :=  r + (\mu - r) y_1 + (p - \alpha) y_2 - \frac{\eta}{2} \big( \sig^2 y_1^2 - 2 \beta \rho \sig y_1 y_2 +  \beta^2 y_2^2 \big)  + \frac{\lam}{1 - \eta} \left((1-\gam y_2)^{1-\eta} - 1\right).
\end{align}

We now present the main result for power utility as follows.

\begin{proposition}
\label{prop:pow_cont}
Suppose $U(x) = x^{1 - \eta}/(1-\eta)$, where $\eta > 0$ and  $\eta \neq 1$, and the following two conditions hold
\begin{align}
\label{eq:psi}
C \ge 0 \qquad \text{and} \qquad \psi := \frac{\delta - (1-\eta) \mathfrak{g}^*}{\eta} > 0,
\end{align}
where $C$ is defined in \eqref{eq:ABC} and $\mathfrak{g}^*: 
%=g(\pi_c^*, \kappa_c^*)
= \max %\limits_{(y_1, y_2) \in \Rb \times [0, 1/\gam)} \, 
g(y_1, y_2)$, with $g$ defined in \eqref{eq:g}.
%, and $\pi_c^*$ and $\kappa_c^*$ given by \eqref{eq:pow_cont_op}.
The optimal constant strategy $u_c^*= (\pi_c^*, \kappa_c^*, \xi_c^*)$ to Problem \eqref{eq:prob} over $\Ac_c$ is given by
\begin{align}
	\label{eq:pow_cont_op} 
	\pi_c^* = \frac{\mu - r}{\eta \sig^2} + \frac{\rho \beta}{\sig} \, \kappa_c^*, \qquad 
	\kappa_c^* = \text{unique solution of \eqref{eq:pow_cont_ka}}, \qquad
	\xi_c^* = \psi, 
\end{align}
where $\kappa_c^*$ solves the following non-linear equation 
\begin{align}
\label{eq:pow_cont_ka}
(1 - \gam y)^{-\eta} + \frac{\eta A}{\lam \gam^2} \, y - \frac{C}{\lam \gam} - 1 = 0.
\end{align}
%with the constants $A$ and $C$ defined in \eqref{eq:ABC}. 
Furthermore, the value function $V_c$ is obtained by 
\begin{align}
\label{eq:V_pow}
V_c(x) :=\sup_{u_c \in \Ac_c} \Eb \left[\int_0^\infty \, e^{-\delta t} \, \frac{D_t^{1-\eta}}{1 - \eta} \, \dd t \right] = J(x; u_c^*) = \frac{\psi^{-\eta}}{1 - \eta} x^{1- \eta}.
\end{align}
\end{proposition}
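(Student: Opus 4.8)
The plan is to solve the reduced maximization problem obtained from \eqref{eq:pow_J_cont}. Since the objective functional over $\Ac_c$ factors as
$J(x;u_c) = \frac{x^{1-\eta}}{1-\eta}\cdot \frac{\xi_c^{1-\eta}}{\delta-(1-\eta)g(\pi_c,\kappa_c)+(1-\eta)\xi_c}$
whenever the denominator is positive, I would first observe that maximizing $J$ over admissible constant strategies decouples: the dependence on $(\pi_c,\kappa_c)$ enters only through $g(\pi_c,\kappa_c)$, and for each fixed value of $g$ one optimizes over $\xi_c\ge 0$ separately. More precisely, since $x>0$, maximizing $J$ is equivalent to maximizing $\xi_c^{1-\eta}/\big(\delta-(1-\eta)g+(1-\eta)\xi_c\big)$ in the case $0<\eta<1$ (where $\frac{1}{1-\eta}>0$) and to minimizing the same ratio in the case $\eta>1$; in both cases the monotone transformation argument shows it is optimal to take $g$ as large as possible, i.e. $g(\pi_c,\kappa_c)=\mathfrak{g}^*$. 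So the argument splits into (i) maximizing $g$ over $\Rb\times[0,1/\gam)$ and (ii) optimizing the scalar function of $\xi_c$.

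For step (i), I would apply the first-order conditions to $g$ in \eqref{eq:g}. Differentiating in $y_1$ gives $(\mu-r)-\eta(\sig^2 y_1-\beta\rho\sig y_2)=0$, i.e. $\pi_c^*=\frac{\mu-r}{\eta\sig^2}+\frac{\rho\beta}{\sig}\kappa_c^*$, matching \eqref{eq:pow_cont_op}. Substituting this back and differentiating in $y_2$ produces, after simplification using the definitions of $A$, $B$, $C$ in \eqref{eq:ABC}, the nonlinear equation $(1-\gam y)^{-\eta}+\frac{\eta A}{\lam\gam^2}y-\frac{C}{\lam\gam}-1=0$ in \eqref{eq:pow_cont_ka}. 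I would then argue existence and uniqueness of the root $\kappa_c^*\in[0,1/\gam)$: let $h(y):=(1-\gam y)^{-\eta}+\frac{\eta A}{\lam\gam^2}y-\frac{C}{\lam\gam}-1$; then $h$ is continuous and strictly increasing on $[0,1/\gam)$ (both $(1-\gam y)^{-\eta}$ and the linear term are increasing since $A>0$), with $h(0)=-\frac{C}{\lam\gam}\le 0$ under the assumption $C\ge 0$, and $h(y)\to+\infty$ as $y\uparrow 1/\gam$; hence a unique root exists in $[0,1/\gam)$. Concavity of $g$ (negative-definite Hessian in the quadratic part, convexity of $-\frac{\lam}{1-\eta}(1-\gam y_2)^{1-\eta}$ handled by the combined second-order condition) confirms this critical point is the global maximizer $\mathfrak{g}^*$.

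For step (ii), with $g$ fixed at $\mathfrak{g}^*$, the denominator becomes $\delta-(1-\eta)\mathfrak{g}^*+(1-\eta)\xi_c=\eta\psi+(1-\eta)\xi_c=\eta(\psi-\xi_c)+\eta\xi_c+(1-\eta)\xi_c$; more usefully one writes it as $\eta\psi+(1-\eta)\xi_c$ and optimizes $\phi(\xi_c):=\xi_c^{1-\eta}/(\eta\psi+(1-\eta)\xi_c)$. Setting $\phi'(\xi_c)=0$ yields $(1-\eta)\xi_c^{-\eta}(\eta\psi+(1-\eta)\xi_c)-\xi_c^{1-\eta}(1-\eta)=0$, which simplifies to $\eta\psi+(1-\eta)\xi_c=\eta\xi_c$, i.e. $\xi_c^*=\psi$; note the condition $\psi>0$ from \eqref{eq:psi} guarantees this is a feasible positive dividend rate and keeps the denominator $\eta\psi+(1-\eta)\psi=\psi>0$ positive throughout the optimization. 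A sign check on $\phi''$ (or a direct comparison argument exploiting $\frac{1}{1-\eta}$ switching sign with $\eta$ relative to $1$) confirms $\xi_c^*=\psi$ is the maximizer in both regimes. Finally, substituting $\xi_c^*=\psi$, $\pi_c^*$, $\kappa_c^*$ back into \eqref{eq:pow_J_cont} gives $J(x;u_c^*)=\frac{x^{1-\eta}}{1-\eta}\cdot\frac{\psi^{1-\eta}}{\psi}=\frac{\psi^{-\eta}}{1-\eta}x^{1-\eta}$, which is \eqref{eq:V_pow}.

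The main obstacle I anticipate is step (i): verifying that the root of \eqref{eq:pow_cont_ka} indeed lies in the admissible interval $[0,1/\gam)$ and that it is the global (not merely local) maximizer of $g$, since $g$ mixes a concave quadratic with the term $\frac{\lam}{1-\eta}((1-\gam y_2)^{1-\eta}-1)$ whose sign and convexity behavior depend on whether $\eta<1$ or $\eta>1$. One needs a careful combined second-order / monotonicity argument (as sketched via $h$ above) that works uniformly in $\eta$. The $\xi_c$ optimization and the final substitution are routine by comparison.
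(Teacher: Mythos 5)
Your proposal is correct and follows essentially the same route as the paper's proof: apply first-order conditions to \eqref{eq:pow_J_cont} to obtain \eqref{eq:pow_cont_op} and \eqref{eq:pow_cont_ka}, establish uniqueness of the root in $[0,1/\gam)$ from monotonicity together with $C\ge 0$, and deduce $\xi_c^*=\psi$ from the one-dimensional optimization --- you are simply supplying the details that the paper compresses into ``a straightforward verification process.'' The only blemish is a transcription slip in the $\xi_c$ step: the first-order condition simplifies to $\eta\psi+(1-\eta)\xi_c=\xi_c$ (not $=\eta\xi_c$), which still yields $\xi_c^*=\psi$.
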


\begin{proof}
Applying the first-order condition to the maximizing problem of the right hand side of \eqref{eq:pow_J_cont} leads to the results of $u_c^*$ in \eqref{eq:pow_cont_op} and the non-linear equation of $\kappa_c^*$ in \eqref{eq:pow_cont_ka}, which has a unique solution in $[0, 1/\gam)$ due to the condition $C \ge 0$. By imposing $\psi >0$, we obtain $\xi_c^* = \psi >0$ and $\delta - (1-\eta) g(\pi_c, \kappa_c) >0$ for all $(\pi_c, \kappa_c) \in \Rb \times [0, 1/\gam)$. 
A straightforward verification process then completes the proof.
\end{proof}

By \eqref{eq:psi}, we have $\lim_{\eta \to 1} \, \psi = \delta$.  
We also notice that, when $\eta \to 1$, the non-linear equation \eqref{eq:pow_cont_ka} reduces to the quadratic equation satisfied by $\kappa_c^*$ in the log utility case.
%$A \, y^2 + B \, y + C = 0$, which leads to the optimal liability strategy in \eqref{eq:log_cont}. 
As such, Proposition \ref{prop:log} for log utility can be seen as the limit result of Proposition \ref{prop:pow_cont} for power utility.
In the literature, a diffusion process \emph{without} jumps is commonly used to approximate the classical Cra\'mer-Lundberg model; see, e.g., \cite{browne1995optimal} and \cite{hojgaard1998optimal}. 
Such a setup corresponds to setting $\lam = 0$ in the risk process \eqref{eq:dR}. 
The results in Proposition \ref{prop:pow_cont} can be simplified when $\lam = 0$, as shown in the corollary below.

\begin{corollary}
\label{cor:power}
Suppose the utility function $U$ is given by \eqref{eq:U} and
there are no jumps ($\lam =0$) in the risk process \eqref{eq:dR}. 
Further suppose two technical conditions hold:
\begin{align}
\label{eq:cor_cond}
p - \alpha + \beta \rho \Lam>0 \qquad \text{ and } \qquad \delta > (1-\eta) \hat{\mathfrak{g}}^*, 
\end{align}
where the constant $\hat{\mathfrak{g}}^*$ is defined by 
\begin{align}
	\label{eq:g-hat}
\hat{\mathfrak{g}}^*:= r + \frac{(p - \alpha + \beta \rho \Lam)^2}{2 \eta \beta^2 ( 1 - \rho^2)} + \frac{\Lam^2}{2 \eta}.
\end{align}
The optimal constant strategy $u_c^*= (\pi_c^*, \kappa_c^*, \xi_c^*)$ to Problem \eqref{eq:prob} over $\Ac_c$ is given by
\begin{align}
	\label{eq:cor_op} 
	\pi_c^* = \frac{\mu - r}{\eta \sig^2} + \frac{\rho \beta}{\sig} \, \kappa_c^*, \qquad 
	\kappa_c^* = \frac{p - \alpha + \beta \rho \Lam}{\eta \beta^2 (1 - \rho^2)}, \qquad
	\xi_c^* = \frac{\delta - (1-\eta) \hat{\mathfrak{g}}^*}{\eta}.
\end{align}
\end{corollary}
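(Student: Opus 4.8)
The plan is to derive Corollary~\ref{cor:power} by running the proof of Proposition~\ref{prop:pow_cont} (and, for the boundary value $\eta=1$, that of Proposition~\ref{prop:log}) through the specialization $\lam=0$. One cannot simply quote Proposition~\ref{prop:pow_cont}, because equation~\eqref{eq:pow_cont_ka} carries $\lam$ in its denominator and is singular at $\lam=0$; so the content is to re-run the first-order analysis with $\lam=0$ and check that its output is exactly the explicit triple~\eqref{eq:cor_op}.

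First I would substitute $\lam=0$ into the definition~\eqref{eq:g} of $g$: the term $\tfrac{\lam}{1-\eta}\bigl((1-\gam y_2)^{1-\eta}-1\bigr)$ vanishes and $g$ reduces to the quadratic
\[
g(y_1,y_2)=r+(\mu-r)y_1+(p-\alpha)y_2-\tfrac{\eta}{2}\,Q(y_1,y_2),
\]
where I write $Q(y_1,y_2):=\sig^2y_1^2-2\beta\rho\sig y_1y_2+\beta^2y_2^2$. Since $\eta>0$ and $\rho^2<1$, the Hessian of $g$ has determinant $\eta^2\sig^2\beta^2(1-\rho^2)>0$ and negative $(1,1)$-entry, hence is negative definite, so $g$ is strictly concave with a unique unconstrained maximizer obtained from $\nabla g=0$. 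The first equation yields $\pi_c^*=\frac{\mu-r}{\eta\sig^2}+\frac{\rho\beta}{\sig}\kappa_c^*$; substituting this into the second equation and using $\Lam=(\mu-r)/\sig$ collapses it to the \emph{linear} relation $(p-\alpha+\beta\rho\Lam)-\eta\beta^2(1-\rho^2)\kappa_c^*=0$, which is the stated $\kappa_c^*$ (this is also what~\eqref{eq:pow_cont_ka} becomes once the $\lam$'s are cleared). Because $\lam\gam=0$, the constant $C$ of~\eqref{eq:ABC} equals $p-\alpha+\beta\rho\Lam$, so the first condition in~\eqref{eq:cor_cond} is precisely $C>0$ and gives $\kappa_c^*=C/\bigl(\eta\beta^2(1-\rho^2)\bigr)>0$; moreover, with $\lam=0$ the Poisson process is degenerate ($N\equiv0$), so the jump term in~\eqref{eq:dX} disappears and the only binding constraint from Definition~\ref{def:ad} is $\kappa_c\ge0$, whence $\kappa_c^*$ is admissible.

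Next I would compute $\mathfrak{g}^*=g(\pi_c^*,\kappa_c^*)$. Rather than expanding squares, I would use that $g-r$ is the sum of a $1$-homogeneous and a $2$-homogeneous term, so Euler's relation gives $y_1g_{y_1}+y_2g_{y_2}=(\mu-r)y_1+(p-\alpha)y_2-\eta Q(y_1,y_2)$; evaluating at the critical point (where this vanishes) yields $(\mu-r)\pi_c^*+(p-\alpha)\kappa_c^*=\eta Q(\pi_c^*,\kappa_c^*)$, hence $\mathfrak{g}^*=r+\tfrac{\eta}{2}Q(\pi_c^*,\kappa_c^*)$. Writing $\sig\pi_c^*=\frac{\Lam}{\eta}+\rho\beta\kappa_c^*$ makes the cross terms in $Q$ cancel, leaving $Q(\pi_c^*,\kappa_c^*)=\frac{\Lam^2}{\eta^2}+\beta^2(1-\rho^2)(\kappa_c^*)^2$, and inserting $\kappa_c^*$ produces exactly $\mathfrak{g}^*=r+\frac{\Lam^2}{2\eta}+\frac{(p-\alpha+\beta\rho\Lam)^2}{2\eta\beta^2(1-\rho^2)}=\hat{\mathfrak{g}}^*$ from~\eqref{eq:g-hat}. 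Therefore $\psi=\frac{\delta-(1-\eta)\hat{\mathfrak{g}}^*}{\eta}$, which is positive exactly by the second condition in~\eqref{eq:cor_cond}; the verification argument in the proof of Proposition~\ref{prop:pow_cont} (finiteness of $J$ and optimality over $\Ac_c$) then applies verbatim and gives $\xi_c^*=\psi$, completing~\eqref{eq:cor_op}. For $\eta=1$ I would repeat the same computation with $f$ of~\eqref{eq:f} in place of $g$: it gives $\mathfrak{f}^*=\hat{\mathfrak{g}}^*\big|_{\eta=1}$, the second condition in~\eqref{eq:cor_cond} is automatic since $\delta>0=(1-\eta)\hat{\mathfrak{g}}^*$, and Proposition~\ref{prop:log} closes that case.

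The only delicate point is the identity $\mathfrak{g}^*=\hat{\mathfrak{g}}^*$: a direct substitution of $\pi_c^*$ and $\kappa_c^*$ into $g$ is doable but invites algebra mistakes, so the right move is to exploit the first-order conditions to eliminate the quadratic term \emph{before} substituting. Everything else — the concavity/uniqueness check, the feasibility of $\kappa_c^*$, and the reduction to Proposition~\ref{prop:pow_cont} / Proposition~\ref{prop:log} — is routine.
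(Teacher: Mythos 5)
Your proposal is correct and follows essentially the same route as the paper's own (very terse) proof: specialize $g$ to $\hat g := g|_{\lam=0}$, maximize the resulting concave quadratic via the first-order conditions to get \eqref{eq:cor_op}, use \eqref{eq:cor_cond} to ensure $\kappa_c^*>0$ and $\xi_c^*>0$, and identify $\hat{\mathfrak{g}}^* = \max \hat g$. Your Euler-relation shortcut for evaluating $\hat g(\pi_c^*,\kappa_c^*)$ and your explicit handling of the $\eta=1$ case are correct details that the paper omits but do not constitute a different approach.
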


\begin{proof}
Introduce $\hat{g} := g|_{\lam = 0}$, where $g$ is defined in \eqref{eq:g}. Maximizing $\hat{g}$ and \eqref{eq:pow_J_cont} leads to the above optimal strategy \eqref{eq:cor_op}.  
Imposing the conditions in \eqref{eq:cor_cond} guarantees that $\kappa_c^*>0$ and $\xi_c^* >0$ in \eqref{eq:cor_op}.
Note that $\hat{\mathfrak{g}}^* = \max\, \hat{g}(y_1,y_2) = \hat g(\pi_c^*, \kappa_c^*)$, where $\pi_c^*$ and $\kappa_c^*$ are obtained in \eqref{eq:cor_op}. 
\end{proof}

We end this section by offering some explanations on the technical assumptions $C \ge 0$ and $\psi > 0$ imposed in Proposition \ref{prop:pow_cont} (or $C \ge 0$  in Proposition \ref{prop:log}). 
First, since $\lim_{\eta \to 1} \, \psi = \delta > 0$ automatically holds, $\psi > 0$ is not needed for the log utility case in Proposition \ref{prop:log}. 
By the model assumption in \eqref{eq:assu} and the definition of $C$ in \eqref{eq:ABC}, 
if $\rho \ge 0$ (recall $\rho$ is the correlation coefficient between the risky asset and the risk process), we always have $C \ge 0$ and such an assumption becomes redundant for $\rho \ge 0$.
If $\rho < 0$ is the true scenario (as argued in \cite{stein2012stochastic}), 
the assumption of $C \ge 0$ means that the risky investment opportunity cannot be too ``good", comparing to the insurance business.
To see this, we rewrite $C < 0$ as $\Lam = \frac{\mu - r}{\sig} > \frac{p - \alpha - \lam \gam}{-\rho \beta}$, where the left hand side is the Sharpe ratio of the risky asset, and the numerator of the right hand side measures the insurer's expected profit (including operation costs) from the insurance business. 
In practice, on the one hand, insurers are only allowed to invest in ``safe" risky assets (with relatively low Sharpe ratio); on the other hand, many insurance businesses are lucrative and insurers charge sufficient safe loading in premiums for solvency and profitability reasons. 
Therefore, imposing $C \ge 0$ not only guarantees the non-negativity of $\kappa_c^*$ but also makes economic sense. 
The other technical assumption $\psi > 0$ makes Problem \eqref{eq:prob} well-posed. 
Indeed, if this assumption fails (i.e., $\psi \le 0$), we have $V(x) = + \infty$ when $0 < \eta < 1$ and $V(x) = - \infty$ when $\eta > 1$ (see Corollary 6.5 in \cite{herdegen2020elementary}).  
By \eqref{eq:psi}, $\psi > 0 \Leftrightarrow \delta > \eta + (1-\eta) \mathfrak{g}^*$, i.e., the subjective discount rate $\delta$ should be greater than a threshold. Similar conditions are commonly imposed for infinite-horizon control problems; see, e.g., \cite{jin2015optimal}[Eq.(4.11)].

\section{Verification for Admissible Strategies}
\label{sec:veri}

We consider the HARA type utility function $U$, given by \eqref{eq:U}, in the formulation of our main stochastic problem (see Problem \eqref{eq:prob}). 
The special \emph{scale property} of the value function inherited from the HARA utility and the classical results from optimal investment problems (see \cite{merton1969lifetime,merton1971optimum}) motivate us to make the following conjecture:
\begin{quote}
	The optimal strategy $u^*$ over the admissible set $\Ac$ to Problem \eqref{eq:prob} is a constant strategy, and hence coincides with the optimal constant strategy $u_c^*$ obtained in Section \ref{sec:cont}.
\end{quote}  
The goal of this section is to verify that the above conjecture is indeed correct when $U$ is given by  \eqref{eq:U}.

\subsection{Notations} 
\label{sub:notation}

Previously in Sections \ref{sec:model} and \ref{sec:cont}, we have used simplified notations to make presentation more concise, since there is no risk of confusion there. 
Now we need to introduce notations in a more rigorous way for the general analysis.
Given an initial wealth $x >0$ and an admissible control $u$, we denote the insurer's wealth at time $t$  by $X_t^{x, u}$, for all $t \ge 0$, which satisfies the stochastic differential equation (SDE) in  \eqref{eq:dX}. 
By Definition \ref{def:ad}, the set of admissible strategies $\Ac$ depends on the insurer's initial wealth $x$ ($x>0$), and we will write it as $\Ac(x)$. 
However, it is still safe to use $\Ac_c$ to denote the set of constant strategies, as its definition in \eqref{eq:ad_cont} is independent of the initial wealth.
Introduce $\xone_t := X_t^{1, u_c^*}$ and $\dione_t:= \xi_c^* \xone_t$ for all $t \ge 0$, where $u_c^*$ is the optimal constant strategy obtained in \eqref{eq:pow_cont_op}. 
Namely, $\xone$ (resp. $\dione$) is the corresponding wealth process (resp. dividend process) under the unit initial wealth ($x=1$) and  the optimal constant strategy $u_c^*$.

Let us take two arbitrary admissible controls $u_1 = (\pi_1, \kappa_1, D_1) \in \Ac(x_1)$ and $u_2 = (\pi_2, \kappa_2, D_2) \in \Ac(x_2)$, where 
$x_1, x_2 >0$. 
We define a new control $u=(\pi, \kappa, D)$, denoted by $u := u_1 \oplus u_2$, along with the corresponding wealth $X^{x,u} = ({X}_t^{ x, u})_{t \ge 0}$ such that the following holds true: 
$x = x_1 + x_2$, $\pi_t {X}_t^{x, u} = \pi_{1,t} {X}_{t}^{x_1, u_1} + \pi_{2,t} {X}_{t}^{x_2, u_2} $, 
$\kappa_t {X}_t^{x, u} = \kappa_{1,t} {X}_{t}^{x_1, u_1} + \kappa_{2,t} {X}_{t}^{x_2, u_2} $, and $D_t = D_{1,t} + D_{1,t}$, which together imply $X_t^{x,u} = X_{t}^{x_1, u_1} + X_{t}^{x_2, u_2}$.
By the definition of $\oplus$ and the linearity of \eqref{eq:dX}, we have $u_1 \oplus u_2 \in \Ac(x_1 + x_2)$ and $V(x_1) + V(x_2) \le V(x_1 + x_2)$. 

For any $u \in \Ac(x)$ and $\epsilon >0$, define a \emph{perturbed} objective functional $J_\epsilon$ by 
\begin{align}
\label{eq:J_eps}
J_\epsilon(x; u) :=  \Eb \left[\int_0^\infty \, e^{-\delta t} \, \frac{\big(D_t + \epsilon \dione_t \big)^{1-\eta}}{1 - \eta} \, \dd t \right] = J(x+\epsilon; u \oplus u_c^*),
\end{align}
where the second equality comes from the definitions of the operator $\oplus$ above and $J$ in \eqref{eq:J}. 
Let us define the corresponding value function $V_\epsilon$ by 
\begin{align}
\label{eq:V_eps}
V_\epsilon(x) := \sup_{u \in \Ac(x)} \, J_\epsilon(x; u),
\end{align}
where $J_\epsilon$ is defined in \eqref{eq:J_eps}.

\subsection{Main Results}
\label{sub:main}

We now present the main result of this paper in Theorem \ref{thm:main}. 

\begin{theorem}
\label{thm:main}
Suppose $U(x) = x^{1 - \eta}/(1-\eta)$, where $\eta > 0$ and $ \eta \neq 1$, $C$ defined in \eqref{eq:ABC} is non-negative, and $\psi$ defined in \eqref{eq:psi} is positive. 
For any $\epsilon>0$ and $x>0$, we have 
\begin{align}
\label{eq:main}
V_\epsilon(x) = V_c(x+\epsilon),
\end{align}
where $V_\epsilon$ is defined in \eqref{eq:V_eps} and $V_c$ is obtained in \eqref{eq:V_pow}.
\end{theorem}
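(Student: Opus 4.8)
The plan is to establish \eqref{eq:main} by proving the two inequalities $V_\epsilon(x) \le V_c(x+\epsilon)$ and $V_\epsilon(x) \ge V_c(x+\epsilon)$ separately. The second (easier) direction follows directly from the definitions: given any constant strategy $u_c \in \Ac_c$ applied with initial wealth $x$, the perturbed payoff $J_\epsilon(x; u_c) = J(x+\epsilon; u_c \oplus u_c^*)$, and in particular taking $u_c = u_c^*$ we get $J_\epsilon(x; u_c^*) = J(x+\epsilon; u_c^* \oplus u_c^*) = J(x+\epsilon; u_c^*) = V_c(x+\epsilon)$, since the $\oplus$ of two copies of the same constant proportional strategy is again that same constant strategy (the proportions $\pi_c^*, \kappa_c^*, \xi_c^*$ are unchanged under adding two wealth processes run with identical constant proportions). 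Hence $V_\epsilon(x) \ge J_\epsilon(x; u_c^*) = V_c(x+\epsilon)$.

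For the hard direction $V_\epsilon(x) \le V_c(x+\epsilon)$, I would fix an arbitrary $u \in \Ac(x)$ and show $J_\epsilon(x; u) \le V_c(x+\epsilon)$. Write $J_\epsilon(x; u) = J(x+\epsilon; u \oplus u_c^*)$ and let $Y_t := X_t^{x+\epsilon,\, u \oplus u_c^*} = X_t^{x,u} + \epsilon \xone_t$ be the combined wealth, with total dividend rate $D_t + \epsilon \dione_t$. The idea, following \cite{herdegen2020elementary}, is a supermartingale / Itô argument against the candidate value function $V_c$ from \eqref{eq:V_pow}: define the process $M_t := e^{-\delta t} V_c(Y_t) + \int_0^t e^{-\delta s} U(D_s + \epsilon \dione_s)\, \dd s$ and show that $(M_t)$ is a supermartingale under $\Pb$. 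Applying Itô's formula to $e^{-\delta t} V_c(Y_t)$ using the SDE \eqref{eq:dX} for $Y$ (with the jump term handled via the compensated Poisson process) produces a drift term of the form $e^{-\delta t}\big[ \mathcal{L}^{u} V_c(Y_t) - \delta V_c(Y_t) + U(D_t + \epsilon\dione_t)\big]$ plus local-martingale terms; because $V_c(y) = \frac{\psi^{-\eta}}{1-\eta} y^{1-\eta}$ solves the HJB equation associated with Problem \eqref{eq:prob} (which is exactly what Proposition \ref{prop:pow_cont} verifies, the constant strategy being the pointwise maximizer), this drift is $\le 0$. Then $\Eb[M_t] \le M_0 = V_c(x+\epsilon)$; one localizes to handle integrability, uses $\Eb[\int_0^t e^{-\delta s}U(D_s+\epsilon\dione_s)\dd s]$ monotone convergence, and crucially shows the transversality condition $\liminf_{t\to\infty} \Eb[e^{-\delta t} V_c(Y_t)] \ge 0$ (when $0<\eta<1$, $V_c > 0$ and this is automatic; when $\eta > 1$, $V_c < 0$ and one needs $\Eb[e^{-\delta t} Y_t^{1-\eta}] \to 0$, which is where the lower bound $Y_t \ge \epsilon \xone_t$ and explicit moment estimates on $\xone_t$ from \eqref{eq:X_cont} enter — this is precisely why the $\epsilon$-perturbation is introduced). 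Letting $t \to \infty$ yields $J(x+\epsilon; u\oplus u_c^*) \le V_c(x+\epsilon)$, and taking the supremum over $u$ gives the claim.

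The main obstacle will be the transversality / integrability analysis in the case $\eta > 1$, where $U$ and $V_c$ are negative and unbounded below near $0$: one must rule out that a cleverly chosen admissible strategy drives $Y_t$ toward $0$ fast enough to make $\Eb[e^{-\delta t} V_c(Y_t)] \to -\infty$ along a subsequence. The perturbation by $\epsilon u_c^*$ is the key device here — it guarantees $Y_t \ge \epsilon \xone_t > 0$ pathwise, and since $\xone$ is a geometric-type process with computable exponential moments (from \eqref{eq:X_cont}), the condition $\psi > 0$ (equivalently $\delta > (1-\eta)\mathfrak{g}^*$) ensures $e^{-\delta t}\Eb[(\epsilon\xone_t)^{1-\eta}]$ decays to $0$, controlling $\Eb[e^{-\delta t} V_c(Y_t)]$ from below via $V_c(Y_t) \ge V_c(\epsilon \xone_t)$ (valid because $V_c$ is increasing). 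A secondary technical point is justifying the Itô/supermartingale computation under only the weak integrability in Definition \ref{def:ad}: this is handled by a standard localization along stopping times $\tau_n$ exiting compact intervals, taking $n \to \infty$ by Fatou on the positive part and monotone convergence on the running integral. I would also verify at the outset the algebraic fact that $V_c$ as given in \eqref{eq:V_pow} satisfies $\sup_{u_c}\{\mathcal{L}^{u_c} V_c - \delta V_c + U(\xi_c y)\} = 0$ with the sup attained at $u_c^*$, which is implicit in the proof of Proposition \ref{prop:pow_cont} but should be stated explicitly as the HJB inequality driving the supermartingale property.
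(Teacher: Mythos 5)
Your proposal follows essentially the same route as the paper's proof: the easy inequality via $u_c^*\oplus u_c^* = u_c^*$, and the hard inequality via an It\^o/supermartingale argument applied to $M_t = \int_0^t e^{-\delta s}U(D_s^\epsilon)\,\dd s + e^{-\delta t}V_c(X_t^\epsilon)$, with the drift controlled by the pointwise HJB inequality (the paper's $Y^{(1)}, Y^{(2)}\le 0$ decomposition) and the transversality limit handled exactly as you describe, via $X_t^\epsilon \ge \epsilon X_t^{(1)}>0$, monotonicity of $V_c$, and $\psi>0$. The only cosmetic difference is that the paper establishes the supermartingale property of the local-martingale part by bounding $M^\epsilon$ from below by the uniformly integrable martingale obtained from $x=0$, $u\equiv 0$, rather than by your localization-plus-Fatou argument; both are standard and lead to the same inequality $\Eb[M_t^\epsilon]\le V_c(x+\epsilon)$.
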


\begin{proof}
Using the definition of $J_\epsilon$ in \eqref{eq:J_eps}, we easily see that $J_\epsilon(x; u_c^*) = J(x+\epsilon; u_c^*) = V_c(x+\epsilon)$, where $u_c^*$ is the optimal constant strategy derived in \eqref{eq:pow_cont_op}. 
Since $u_c^* \in \Ac(x)$, we obtain $V_\epsilon(x) \ge V_c(x + \epsilon)$. 
In the remaining of the proof, we aim to show the converse inequality, $V_\epsilon(x) \le V_c(x+\epsilon)$, also holds.

Let $\epsilon>0$ and $x>0$ be given, take any admissible control $u \in \Ac(x)$. 
We define a new control $u^\epsilon := u \oplus u_c^*$,  where the initial wealth associated with the strategy $u_c^*$ is $\epsilon$, and denote it by $u^\epsilon = (\pi^\epsilon, \kappa^\epsilon, D^\epsilon)$.
Introduce the corresponding wealth process by $X^\epsilon = (X^\epsilon_t)_{t \ge 0}$, i.e., $X_t^{\epsilon} := X_t^{x+\epsilon, u^\epsilon}$. 
%Denote by $D^\epsilon$ the dividend strategy of $u^\epsilon$.
Recall from \eqref{eq:X_cont} that the wealth process under a constant strategy is always positive, which leads to 
\begin{align}
\label{eq:positive}
X_t^\epsilon = X_t^{x,u} + \epsilon \xone_t \ge \epsilon \xone_t > 0, \qquad \forall \, t \ge 0.
\end{align}

Define another new process $M^\epsilon$ by 
\begin{align}
\label{eq:M}
M_t^\epsilon := \int_0^t \, e^{-\delta s} U(D^\epsilon_s) \, \dd s + e^{-\delta t} \, V_c(X_t^\epsilon), \qquad \forall \, t \ge 0.
\end{align}
Recall from  \eqref{eq:V_pow} that $V_c(x) = \psi^{-\eta} x^{1-\eta} / (1-\eta)$.
Since $V_c$ is a smooth function over $(0, \infty)$ and $X^\epsilon >0$, we can apply It\^o's lemma to $M^\epsilon$.
Using 
%$V_c'(x) = (\psi x)^{-\eta}$ and $V_c''(x) = - \eta \psi^{-\eta} x^{-1-\eta}$, and 
the SDE \eqref{eq:dX} of $X^\epsilon$, we obtain 
\begin{align}
	\label{eq:dM} \qquad
\dd M_t^\epsilon \, = \, & e^{-\delta t} V_c(X_{t-}^\epsilon) \left[(\sig \pi_t^\epsilon - \beta \rho \kappa_t^\epsilon) \, \dd \wone_t - \beta \sqrt{1 - \rho^2} \kappa_t^\epsilon \, \dd \wtwo_t + \big( (1-\gam \kappa_t^\epsilon)^{1-\eta} -1\big) \, \dd \wt{N}_t\right] + \\
& e^{-\delta t} \Bigg\{ - \delta V_c(X_{t}^\epsilon) + U(D_t^\epsilon) + V_c'(X_{t}^\epsilon) \big[ (r + (\mu - r)\pi_t^\epsilon + (p - \alpha) \kappa_t^\epsilon ) X_t^\epsilon - D_t^\epsilon \big] \\
&\qquad  \frac{1}{2} V_c''(X_{t}^\epsilon) \big(\sig^2 (\pi_t^\epsilon)^2 - 2 \beta \rho \sig \pi_t^\epsilon \kappa_t^\epsilon + \beta^2 (\kappa_t^\epsilon)^2 \big) (X_t^\epsilon)^2 + \lam \big( V_c((1 - \gam \kappa_t^\epsilon) X_{t-}^\epsilon) - V_c( X_{t-}^\epsilon)\big)
\Bigg\} \, \dd t.
\end{align}

We decompose the $\dd t$ term in \eqref{eq:dM} into two parts as $e^{-\del t} (Y_t^{(1)} + Y_t^{(2)}) \dd t$, where 
\begin{align}
Y_t^{(1)} &:=   U(D_t^\epsilon) - D_t^\epsilon \, V_c'(X_{t}^\epsilon) - \frac{\eta}{1 - \eta} \, \left(V_c'(X_{t}^\epsilon)\right)^{1- \frac{1}{\eta}}, 
%Y_t^{(2)} &:=  \frac{\eta}{1 - \eta} \, \left(V_c'(X_{t}^\epsilon)\right)^{1- \frac{1}{\eta}} - \delta V_c(X_{t}^\epsilon) + X_t^\epsilon  \, V_c'(X_{t}^\epsilon) \big(r + (\mu - r)\pi_t + (p - \alpha) \kappa_t \big) 
\end{align}
and $Y_t^{(2)}$ is the remaining part of the $\dd t$ term in \eqref{eq:dM}.
Using the first-order condition, we show that the inequality $\frac{1}{1 - \eta} \, y^{1-\eta} - y - \frac{\eta}{1 - \eta} \le 0$ holds true for all $y \ge 0$ (the maximum value is 0 taken at $y=1$).
%\begin{align}
%\frac{1}{1 - \eta} \, y^{1-\eta} - y - \frac{\eta}{1 - \eta} \le 0
%\end{align}
By substituting $D_t^\epsilon \, (V_c'(X_{t}^\epsilon))^{1/\eta}$ for $y$ in the above inequality, we obtain that $Y_t^{(1)} \le 0$ for all $t \ge 0$. 
Recall the result of $V_c$ in \eqref{eq:V_pow}, from which we get 
$V_c'(x) = (\psi x)^{-\eta}$ and $V_c''(x) = - \eta \psi^{-\eta} x^{-1-\eta}$.
We then analyze $Y_t^{(2)}$ as follows
\begin{align}
Y_t^{(2)} &:= \frac{\eta}{1 - \eta} \, \left(V_c'(X_{t}^\epsilon)\right)^{1- \frac{1}{\eta}} - \delta V_c(X_{t}^\epsilon) + X_t^\epsilon  \, V_c'(X_{t}^\epsilon) \big(r + (\mu - r)\pi_t^\epsilon + (p - \alpha) \kappa_t^\epsilon \big) \\
&\qquad \frac{1}{2} V_c''(X_{t}^\epsilon) \big(\sig^2 (\pi_t^\epsilon)^2 - 2 \beta \rho \sig \pi_t^\epsilon \kappa_t^\epsilon + \beta^2 (\kappa_t^\epsilon)^2 \big) (X_t^\epsilon)^2 + \lam \big( V_c((1 - \gam \kappa_t^\epsilon) X_{t-}^\epsilon) - V_c( X_{t-}^\epsilon)\big) \\
&\;= (g(\pi_t^\epsilon, \kappa_t^\epsilon) - \mathfrak{g}^*) \cdot (1-\eta) V_c(X_{t}^\epsilon) = (g(\pi_t^\epsilon, \kappa_t^\epsilon) - \mathfrak{g}^*) \cdot \psi^{-\eta} (X_t^\epsilon)^{1-\eta},
\end{align} 
where we have used the definitions of $g$ in \eqref{eq:g} and $\psi$ in \eqref{eq:psi} to derive the second equality. Since $\mathfrak{g}^*$ is the maximum value of the function $g$ over $\Rb \times [0, 1/\gam)$, we obtain $Y_t^{(2)} \le 0$ for all $t \ge 0$. 
In addition, we notice that  $Y_t^{(1)} = Y_t^{(2)} = 0$ if and only if $u^\epsilon = u_c^*$.

Let us denote $L^\epsilon$ the local martingale part of $ M_t^\epsilon$ defined in \eqref{eq:M}.
%By Definition \ref{def:ad}, $\pi$ is square integrable and $\kappa \in [0, 1/\gam)$. 
%By imposing a stronger constraint $\kappa \in [0, 1/\gam - \epsilon']$, where $\epsilon'>0$ is small enough.
We claim that  $L^\epsilon$ is a supermartingale. 
To see this result, we  define $M_0^\epsilon$ similar to that of $M^\epsilon$ in \eqref{eq:M}, but under $x=0$ and $u \equiv 0$.
By repeating the above analysis, we easily show that $M_0^\epsilon$ is a uniformly integrable martingale (recall with $u \equiv 0$, $u^\epsilon = u_c^*$).
Now using the monotonicity of both $U$ and $V_c$, we deduce that $M^\epsilon \ge M_0^\epsilon$, which immediately proves that $L^\epsilon$ is indeed a supermartingale as claimed.
By the previous finding on $Y_t^{(1)}$ and $Y_t^{(2)}$, we get
\begin{align}
\label{eq:in-M}
\Eb \left[M_t^\epsilon \right] \le V_c (x + \epsilon) + \Eb \left[ L_t^\epsilon \right] \le V_c(x + \epsilon).
\end{align}

In the final step, using \eqref{eq:J_eps}, \eqref{eq:positive}, \eqref{eq:M}, and the above results, we obtain 
\begin{align}
J_\epsilon(x; u) &= \lim_{t \to \infty} \, \Eb \left[\int_0^t \, e^{-\del s} \, U(D_s^\epsilon) \, \dd s\right] = \lim_{t \to \infty} \,  \Eb \left[ M_t^\epsilon - e^{-\del t} \, V_c(X_t^\epsilon)\right] \\
&\le \limsup_{t\to \infty} \Eb\left[ M_t^\epsilon \right] - \liminf_{t \to \infty} \Eb \left[ e^{-\del t} \, V_c(X_t^\epsilon)\right] \\
&\le \limsup_{t\to \infty} \Eb\left[ M_t^\epsilon \right] \le V_c(x + \epsilon),
\end{align}
where, to prove the second inequality above, we have used the following result
\begin{align}
	\label{eq:non-negative}
\liminf_{t \to \infty} \Eb \left[ e^{-\del t} \, V_c(X_t^\epsilon)\right] 
%= \psi^{-\eta} \, \liminf_{t \to \infty} \Eb \left[ e^{-\del t} \, \frac{(X_t^u + \epsilon X_t^{(1)})^{1-\eta}}{1 - \eta}\right] 
\ge \psi^{-\eta} \epsilon^{1-\eta} \, \liminf_{t \to \infty} \Eb \left[ e^{-\del t} \, \frac{( X_t^{(1)})^{1-\eta}}{1 - \eta}\right] = 0.
\end{align}
By taking supremum over $\Ac(x)$, we obtain $V_\epsilon(x) \le V_c(x + \epsilon)$. 
The proof is now complete.
\end{proof}

%\begin{remark}
%The proof to Theorem \ref{thm:main} is modified from \cite{herdegen2020elementary} (see Theorems 4.3 and 6.1 therein), with noticeable simplifications on handling the $\dd t$ term in \eqref{eq:dM}.
%The arguments on $L^\epsilon$ being a supermartingale can also be simplified by noting that $\kappa$ is bounded here (see Definition \ref{def:ad}).
%\end{remark}

Using Theorem \ref{thm:main} and the monotonicity result $J(x;u) \le J_\epsilon(x; u)$, we have the following corollary that eventually verifies the conjecture in the opening of this section.

\begin{corollary}
\label{cor:veri}
Under the same assumptions as in Theorem \ref{thm:main}, we have 
\begin{align}
	V(x) = V_c(x) = J(x, u_c^*), \qquad \forall \, x>0,
\end{align}
where $u_c^*$ and  $V_c$  are given respectively by \eqref{eq:pow_cont_op} and \eqref{eq:V_pow}, and $V$ is the value function to Problem \eqref{eq:prob}.
\end{corollary}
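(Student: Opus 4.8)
The plan is to sandwich $V(x)$ between $V_c(x)$ and $V_c(x+\epsilon)$ for every $\epsilon>0$ and then let $\epsilon\downarrow 0$, so that the corollary follows at once from Theorem \ref{thm:main}.

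First I would establish the lower bound. The optimal constant strategy $u_c^*=(\pi_c^*,\kappa_c^*,\xi_c^*)$ from Proposition \ref{prop:pow_cont} lies in $\Ac_c$, and $\Ac_c\subset\Ac(x)$ for every $x>0$ (as noted at the start of Section \ref{sec:cont}); moreover Proposition \ref{prop:pow_cont} gives $J(x;u_c^*)=V_c(x)$. Using $u_c^*$ as an admissible candidate in \eqref{eq:prob} therefore yields $V(x)\ge J(x;u_c^*)=V_c(x)$.

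Next I would prove the matching upper bound. Fix $x>0$, $\epsilon>0$, and an arbitrary $u\in\Ac(x)$. Since the utility $U$ in \eqref{eq:U} is increasing and $\dione_t=\xi_c^*\xone_t\ge 0$ (indeed strictly positive, as $\xi_c^*=\psi>0$ and $\xone>0$ by \eqref{eq:X_cont}), we have $U(D_t+\epsilon\dione_t)\ge U(D_t)$ pathwise; integrating against $e^{-\delta t}$ and taking expectations gives the monotonicity inequality $J(x;u)\le J_\epsilon(x;u)$. By the definition \eqref{eq:V_eps} of $V_\epsilon$ and Theorem \ref{thm:main}, $J_\epsilon(x;u)\le V_\epsilon(x)=V_c(x+\epsilon)$. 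Taking the supremum over $u\in\Ac(x)$ yields $V(x)\le V_c(x+\epsilon)$.

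Finally I would pass to the limit: since $V_c(y)=\frac{\psi^{-\eta}}{1-\eta}\,y^{1-\eta}$ is continuous on $(0,\infty)$, letting $\epsilon\downarrow 0$ in $V(x)\le V_c(x+\epsilon)$ gives $V(x)\le V_c(x)$. Combined with the lower bound this forces $V(x)=V_c(x)=J(x;u_c^*)$ for all $x>0$, which in particular shows $u_c^*$ is optimal among \emph{all} admissible strategies, verifying the conjecture. There is no real obstacle here: the only two points requiring (routine) justification are the monotonicity inequality $J\le J_\epsilon$ and the continuity of $V_c$, both of which are elementary given the explicit form of $V_c$ and the properties of $U$. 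The corollary is thus essentially an immediate consequence of Theorem \ref{thm:main}.
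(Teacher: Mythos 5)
Your proposal is correct and follows exactly the route the paper intends: the paper derives the corollary from Theorem \ref{thm:main} together with the monotonicity inequality $J(x;u)\le J_\epsilon(x;u)$, and your sandwich argument (lower bound from $u_c^*\in\Ac_c\subset\Ac(x)$, upper bound $V(x)\le V_\epsilon(x)=V_c(x+\epsilon)$ followed by $\epsilon\downarrow 0$ using continuity of $V_c$) is precisely the natural filling-in of that one-line justification.
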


In both Theorem \ref{thm:main} and Corollary \ref{cor:veri}, the utility function is of power form, with $\eta \neq 1$. 
When $\eta = 1$ (the log utility case), the same result holds, i.e., we still have $V(x) = V_c(x)$, where now $V_c$ is given by \eqref{eq:log_cont_value} from Proposition \ref{prop:log}. 
The analysis leading to this conclusion is similar to the one given above for the power utility case, and is thus omitted.

\begin{corollary}
\label{cor:veri_log}
Suppose $U(x) = \ln x$ and the constant $C$ defined in \eqref{eq:ABC} is non-negative.  We have 
\begin{align}
	V(x) = V_c(x) = J(x, u_c^*), \qquad \forall \, x > 0,
\end{align}
where $u_c^*$ and  $V_c$  are given respectively by \eqref{eq:log_cont} and \eqref{eq:log_cont_value}, and $V$ is the value function to Problem \eqref{eq:prob}.
\end{corollary}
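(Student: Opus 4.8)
\textbf{Proof proposal for Corollary \ref{cor:veri_log}.}

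The plan is to mirror the entire argument developed for the power-utility case in Theorem \ref{thm:main} and Corollary \ref{cor:veri}, making the modifications forced by the logarithmic utility. First I would set up the perturbed objective functional exactly as in \eqref{eq:J_eps}, but with the log integrand: for $u \in \Ac(x)$ and $\epsilon > 0$, put $J_\epsilon(x; u) := \Eb[\int_0^\infty e^{-\delta t} \ln(D_t + \epsilon D^{(1)}_t)\, \dd t] = J(x+\epsilon; u \oplus u_c^*)$, where now $D^{(1)}_t = \xi_c^* X^{(1)}_t$ with $u_c^*$, $\xi_c^*$ the log-optimal constant strategy and rate from \eqref{eq:log_cont}, and $X^{(1)}$ the unit-wealth process under $u_c^*$. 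Then define $V_\epsilon(x) := \sup_{u \in \Ac(x)} J_\epsilon(x;u)$. The inequality $V_\epsilon(x) \ge V_c(x+\epsilon)$ is immediate by plugging $u = u_c^*$, so the whole content is the reverse inequality $V_\epsilon(x) \le V_c(x+\epsilon)$.

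For the reverse inequality I would take an arbitrary $u \in \Ac(x)$, form $u^\epsilon := u \oplus u_c^*$ with wealth $X^\epsilon_t = X^{x,u}_t + \epsilon X^{(1)}_t \ge \epsilon X^{(1)}_t > 0$ (so $V_c$, which here is $V_c(x) = \tfrac1\delta \ln(\delta x) + \tfrac1{\delta^2}(\mathfrak f^* - \delta)$ and is smooth on $(0,\infty)$, can be fed to It\^o's lemma), and define $M^\epsilon_t := \int_0^t e^{-\delta s}\ln(D^\epsilon_s)\, \dd s + e^{-\delta t} V_c(X^\epsilon_t)$. Applying It\^o's formula and the SDE \eqref{eq:dX} to $M^\epsilon$, I would split the $\dd t$-drift into a term $Y^{(1)}_t$ collecting the consumption/dividend comparison and a term $Y^{(2)}_t$ collecting the rest. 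With $V_c'(x) = \tfrac1{\delta x}$, the relevant pointwise inequality is $\ln y - y + 1 \le 0$ for all $y > 0$ (maximum $0$ at $y=1$); substituting $y = \delta D^\epsilon_t X^\epsilon_t$ (i.e. $y = D^\epsilon_t / V_c'(X^\epsilon_t)$, up to the additive constant absorbed in $V_c$) gives $Y^{(1)}_t \le 0$. For $Y^{(2)}_t$, using $V_c'(x) = 1/(\delta x)$, $V_c''(x) = -1/(\delta x^2)$, and the definition \eqref{eq:f} of $f$ together with $\mathfrak f^* = \max f$, the drift collapses to a constant multiple of $f(\pi^\epsilon_t, \kappa^\epsilon_t) - \mathfrak f^* \le 0$ (the $\delta$ in the denominators produces the $1/\delta$ prefactor rather than the $(1-\eta)V_c$ prefactor of the power case). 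Both $Y^{(1)}$ and $Y^{(2)}$ vanish iff $u^\epsilon = u_c^*$. Then, as in Theorem \ref{thm:main}, comparing $M^\epsilon$ with the analogous process $M^\epsilon_0$ built from $x=0,\ u\equiv 0$ (for which $u^\epsilon = u_c^*$ and $M^\epsilon_0$ is a uniformly integrable martingale) shows the local-martingale part of $M^\epsilon$ is a supermartingale, so $\Eb[M^\epsilon_t] \le V_c(x+\epsilon)$; finally $J_\epsilon(x;u) = \lim_{t\to\infty}\Eb[M^\epsilon_t - e^{-\delta t}V_c(X^\epsilon_t)] \le V_c(x+\epsilon)$ after checking $\liminf_{t\to\infty}\Eb[e^{-\delta t}V_c(X^\epsilon_t)] \ge 0$. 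Taking the supremum over $\Ac(x)$ gives $V_\epsilon(x) = V_c(x+\epsilon)$, and then, exactly as in Corollary \ref{cor:veri}, the monotonicity $J(x;u) \le J_\epsilon(x;u)$ forces $V(x) \le V_\epsilon(x) = V_c(x+\epsilon)$ for every $\epsilon > 0$; letting $\epsilon \downarrow 0$ and using continuity of $V_c$ yields $V(x) \le V_c(x)$, while $V(x) \ge V_c(x)$ holds since $u_c^* \in \Ac(x)$.

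The main obstacle, and the reason the claim is not entirely automatic, is the integrability/tail bookkeeping caused by the logarithm being unbounded both above and below: one must ensure $\liminf_{t\to\infty}\Eb[e^{-\delta t}V_c(X^\epsilon_t)] \ge 0$ (here $V_c$ is not sign-definite, unlike the power case where $V_c(x)=\psi^{-\eta}x^{1-\eta}/(1-\eta)$ has a fixed sign), and that the limit defining $J_\epsilon(x;u)$ genuinely equals $\lim_t \Eb[M^\epsilon_t - e^{-\delta t}V_c(X^\epsilon_t)]$ rather than merely dominating it. The bound $X^\epsilon_t \ge \epsilon X^{(1)}_t$ from \eqref{eq:positive}, the explicit exponential form \eqref{eq:X_cont} of $X^{(1)}$, and the fact that $\Eb[\ln X^{(1)}_t]$ grows only linearly in $t$ (so $e^{-\delta t}\Eb[\ln X^\epsilon_t] \to 0$) handle the lower tail; the upper tail and the supermartingale comparison with $M^\epsilon_0$ control the other side. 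Since the excerpt explicitly says this analysis ``is similar to the one given above for the power utility case, and is thus omitted,'' the proof is a routine transcription, with the only genuine care needed at these two-sided integrability estimates.
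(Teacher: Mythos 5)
Your proposal follows exactly the route the paper intends: the paper gives no separate proof of Corollary \ref{cor:veri_log}, stating only that the analysis is ``similar to the one given above for the power utility case, and is thus omitted,'' and your transcription supplies precisely that — the perturbed functional, $X^\epsilon_t \ge \epsilon X^{(1)}_t > 0$, the It\^o drift split with $\ln y - y + 1 \le 0$ replacing the power inequality, the drift collapsing to a multiple of $f(\pi^\epsilon_t,\kappa^\epsilon_t) - \mathfrak f^*\le 0$, the supermartingale comparison with $M^\epsilon_0$, and the $\epsilon \downarrow 0$ limit; your extra care with the sign-indefiniteness of $\ln$ in the transversality step is a genuine point the paper glosses over. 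One small correction: the substitution making $Y^{(1)}_t \le 0$ should be $y = D^\epsilon_t\, V_c'(X^\epsilon_t) = D^\epsilon_t/(\delta X^\epsilon_t)$ (the $\eta\to 1$ limit of the paper's $y = D^\epsilon_t\,(V_c'(X^\epsilon_t))^{1/\eta}$), not its reciprocal $\delta D^\epsilon_t X^\epsilon_t$ as you wrote.
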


\subsection{Comparisons with the HJB Approach}
\label{sub:comp}

In this subsection, we compare the perturbation approach adopted in this paper (see \cite{herdegen2020elementary}) with the standard HJB approach of solving stochastic control problems (see, e.g., \cite{merton1969lifetime, merton1971optimum}). 

The perturbation approach  works brilliantly for the insurer's optimal dividend problem under  HARA (power and log) utility considered in Section \ref{sub:main}. In addition, the proof to Theorem \ref{thm:main} applies to all the cases of the relative risk aversion $\eta$: $0 < \eta <1$, $\eta = 1$, and $\eta > 1$. 
In comparison,  the HJB approach encounters technical issues when $\eta > 1$.
For this reason, many related existing works that rely on the HJB approach assume $0 < \eta \le 1$; see \cite{cadenillas2007optimal}, \cite{thonhauser2011optimal}, \cite{jin2015optimal}, and \cite{xu2020optimal}. 
Below we discuss some of these issues in details.

\begin{itemize}
	\item A key step in the verification theorem under the HJB approach is to guarantee the validness of interchanging the order of a limit and a conditional expectation (integral) for a family of processes when passing the limit to infinity. 
	A sufficient condition for such an interchange is the uniform integrability of the processes, which can be  achieved by imposing the growth condition on the value function or the utility function. 
	For instance, 
	%\cite{korn2001option}[Theorem 5.17, p.229] assume that $|V(t,x)| \le K(1 + |x|^k)$ holds for the candidate value function $V(t,x)$, where $K>0$ and  $ k \in \Nb$; 
	\cite{capponi2014dynamic}[Eq.(4.10)]
	%\cite{sotomayor2009explicit}[Eq.(2.2)] 
	assume that $U(x) \le K(1 + x)$ holds for the utility function $U$, where $K,x>0$.
	%; see also a similar condition in .
	Alternatively, one may ``simply" impose a much stronger condition of uniform boundedness on both the drift and the diffusion terms;  see \cite{fleming2006controlled}[Section IV.5, Eq.(5.2), p.164]. 
	It is trivial to see that both ``solutions" are restrictive and do not apply to Problem \eqref{eq:prob} when $\eta > 1$. 
	
	\item The second technical issue we discuss is the possibility of bankruptcy (i.e., $X_t^{x,u} = 0$ at some time $t < \infty$). A full and rigorous treatment of bankruptcy under the standard approaches requires lengthy technical arguments. \cite{karatzas1986explicit} introduce a free parameter $P$ as the bequest value at the bankruptcy time and consider a modified version of the original problem (which has a boundary condition $\lim_{x \downarrow 0} V^P(x) = P$), and take pages to show that the value function $V^P(x)$ of the modified problem converges to the value function $V(x)$ of the original problem as $P \to -\infty$.
	
	\item Last, a commonly required condition for an infinite horizon problem is the so-called transversality condition, which is given by 
	 \begin{align}
	 	\label{eq:trans}
		\lim\limits_{t \to \infty} \, \Eb \left[e^{-\delta t} \, \frac{(X_t^u)^{1-\eta}}{1 - \eta }\right] = 0 \quad \text{or in a weaker version} \quad 
		\liminf\limits_{t \to \infty} \, \Eb \left[e^{-\delta t} \, \frac{(X_t^u)^{1-\eta}}{1 - \eta }\right] \ge  0.
	\end{align}
	The above transversality condition \eqref{eq:trans} is satisfied automatically if $0 < \eta <1$ but may fail in general when $\eta > 1$. Proving \eqref{eq:trans} is certainly not trivial when $\eta > 1$, although many applications ``simply" assume \eqref{eq:trans} holds to avoid a proof.
	In fact, \eqref{eq:trans} may not be expected; see \cite{herdegen2020elementary}[Remark 4.7].
\end{itemize}
Further technical discussions can be found in \cite{herdegen2020elementary}. 
We remark that another standard approach, the martingale (duality) method, faces the ``dual'' side of the technical issues/assumptions in the HJB approach. 
For that, we refer readers to  \cite{karatzas1998methods}[Section 3.9] for technical assumptions imposed on optimal investment problems over an infinite horizon.

Having seen some of the technical difficulties 
%linked with the case of $\eta > 1$,
under the HJB approach, we now explain why the perturbation approach proposed in \cite{herdegen2020elementary} works so well in the proof of Theorem \ref{thm:main}. First, by \eqref{eq:positive}, the perturbed wealth process $X^\epsilon$ is \emph{strictly} positive, immediately yielding a significant advantage. 
That is, we no longer need to deal with bankruptcy in the proof of Theorem \ref{thm:main}.
Second, by utilizing the explicit results from Section \ref{sec:cont} and the monotonicity of the utility/value functions, we easily show that $M_0^\epsilon$ is uniformly integrable and $M^\epsilon_t \ge M^{\epsilon}_{0,t}$ for all $t \ge 0$. 
(Note $M^\epsilon$ is defined by \eqref{eq:M} and $M^\epsilon_0$ is a special version of $M^\epsilon$ under $x=0$ and $u=0$.)
There results together verify that the process $L^\epsilon$, the local martingale part of $M^\epsilon$, is a supermartingale, which leads to an essential inequality in \eqref{eq:in-M}. 
In other words, a carefully-chosen special (optimal) strategy enables us to obtain a lower bound on the process $M^\epsilon$, bypassing the difficulty of proving uniform integrability of $(e^{-\delta t} V(X_t^u))_{t \ge 0}$. 
Note that in the  definition of $M^\epsilon$ in \eqref{eq:M}, the second term is $e^{-\delta t} V_c(X_t^\epsilon)$, not $e^{-\delta t} V(X_t^u)$ as in the standard HJB approach. 
Here the advantages are at least twofold: (1) $V_c$ is already obtained explicitly in \eqref{eq:V_pow} and is smooth, while the value function $V$ is assumed to be smooth and is yet to be solved from  the associated HJB equation, and (2) $X_t^\epsilon >0$ but $X_t^u \ge 0$ for all $t$.
Last, we point out that we derive an inequality in \eqref{eq:non-negative}, which is in the same spirit to the  transversality condition in \eqref{eq:trans}. 
To be precise,  replacing $X^u$ by $X^\epsilon$ in \eqref{eq:trans} leads to \eqref{eq:non-negative}.
In our case, proving \eqref{eq:non-negative} is very easy, by the monotone increasing property of $V_c$ and \eqref{eq:positive}.   
However, as already pointed out previously, the transversality condition \eqref{eq:trans} does not hold in general, and even for specific problems when it does hold, proving it is not trivial in the case of $\eta > 1$ due to the possibility of bankruptcy.

In the last part, we discuss the shortcomings of the perturbation approach. 
The standard HJB and martingale approaches are rather general, and can be applied to solve a wide range of control problems even beyond utility maximization. 
However, the perturbation approach used in \cite{herdegen2020elementary} and this paper is on the special side; these two papers show that this approach provides an extraordinary alternative to optimal investment/consumption and optimal dividend problems under the HARA utility maximization. 
It remains an open question to explore the use of this approach to different control problems.

\section{Economic Analysis}
\label{sec:econ}

In this section, we conduct an economic analysis to study how the model parameters and risk aversion affect the insurer's optimal strategy $u^*$. 
Due to the popularity of applying a diffusion process to approximate the risk model (see \cite{browne1995optimal} and \cite{hojgaard1998optimal}), we assume there are no jumps in the risk process $R$  (setting $\lam =0$ in \eqref{eq:dR}), unless stated otherwise. 
Recall that, in the case of no jumps,  we obtain the optimal strategy $u^* = u_c^*$, given by \eqref{eq:cor_op}, in Corollary \ref{cor:power}. 
We divide the economic analysis along two directions, with analytic results in Section \ref{sub:analytic} and numerical results in Section \ref{sub:nume}.

\subsection{Analytic Results}
\label{sub:analytic}

In the insurance literature, one often assumes that the insurance market is \emph{independent} of the financial market, i.e., $\rho = 0$ in \eqref{eq:dR}. 
But as argued in \cite{stein2012stochastic} and many others, ignoring the possible dependence between the two markets could lead to catastrophic consequences (e.g., the infamous AIG case in the financial crisis of 2007-2008). 
Hence in the first study, we investigate the impact of $\rho$ on the insurer's optimal strategy.
By \eqref{eq:cor_op}, we obtain 
\begin{align}
\label{eq:rho}
\frac{\partial \pi_c^*}{\partial \rho} =  \frac{\beta}{\sig}\, \kappa_c^*>0, 
\qquad 
\frac{\partial \kappa_c^*}{\partial \rho} = \frac{\Lam}{\eta  \beta} + 2 \rho \kappa_c^*,
\qquad 
\frac{\partial \xi_c^*}{\partial \rho} = - \frac{1-\eta}{\eta^2} \, \left[ \beta \Lam \kappa_c^* + \rho \big(\kappa_c^*\big)^2 \right],
\end{align}
where the Sharpe ratio $\Lam = (\mu - r) / \sig$ is defined in \eqref{eq:Lam}. 
An interesting result is that $\partial \pi_c^* / \partial \rho > 0$, 
which is also found in \cite{zou2014optimal}[Figures 1-3] and \cite{shen2020mean}[Figure 1].
%Using this result, we may interpret $\rho$ as an important investment signal: the higher the $\rho$, the better the investment opportunity of the risky asset. 
%
By \eqref{eq:cor_op} and \eqref{eq:rho}, when $\rho >0$, we observe $\partial \kappa_c^* / \partial \rho>0$ and $\text{Sign} \, (\partial \xi_c^* / \partial \rho)  = \text{Sign} \, (\eta - 1)$, and further $\pi_c^* >0$.
Note that when $\rho >0$ the risky asset provides a natural hedge to the insurance business: an increase in the risk $R$ (losses to the insurer) is associated with an increase in the risky asset price $S$. This observation implies that the insurer should long the risky asset, explaining why $\pi_c^* |_{\rho >0} >0$. 
For the same reason, when the \emph{positive} correlation becomes stronger, the hedging effect amplifies, or equivalently, the ``risky'' insurance business becomes ``less risky", indicating $\frac{\partial \kappa_c^*}{\partial \rho}|_{\rho >0}>0$. 
Regarding the result of $\partial \xi_c^* / \partial \rho$ in \eqref{eq:pi}, observe that the inequality $\beta \Lam \kappa_c^* + \rho (\kappa_c^*)^2 >0$ holds if and only if $\rho > - \beta \Lam / \kappa_c^*$, which can  be seen as a good indication of the overall market condition.\footnote{%Roughly speaking, $\beta \Lam \kappa_c^* + \rho (\kappa_c^*)^2 >0$ holds when $\rho$ is not too negative. 
Recall that $\pi_c^* = \frac{\mu - r}{\eta \sig^2} + \frac{\rho \beta }{\sig} \kappa_c^*$, where all the components are positive except the possibility of $\rho$ being negative. When $\rho$ is negative, the second term of hedging demand becomes negative and acts on the opposite direction of the first term  (Merton's strategy) in $\pi_c^*$. 
When $\rho \le - \beta \Lam / \kappa_c^*$, the negative second term  dominates the positive first term in $\pi_c^*$, i.e., the insurer needs to ``short sell'' the risky asset mainly for the hedging purpose, which may lead to excessive risk taking.}
Next, recall that the power utility function in \eqref{eq:U} is applied to quantitatively measure the welfare of dividend payments (monetary wealth). In similar settings, empirical evidence often shows that the relative risk aversion $\eta$ is greater than 1; see for instance \cite{meyer2005relative}[Table 1]. 
As a result, it is reasonable for us to assume $\eta > 1$, although we do not rule out the possibility of $0 < \eta \le 1$.  
Now when $\rho$ increases in the region $(- \beta \Lam / \kappa_c^*, 1)$, the overall market condition is likely to become more favorable to the insurer. As such, insurers with high risk aversion ($\eta > 1$) pay dividend at a higher rate, since they prefer to ``cash out'' now while the market is still in ``good'' regime;  on the other hand, insurers with low risk aversion ($0<\eta <1$) reduce the dividend rate, as the desire to invest more capital in the risky asset is dominating. 
%To summarize, under an ideal market condition of $\rho >0$ and a reasonable risk aversion $\eta >1$, all the optimal investment, insurance, and dividend strategies have a positive relation with the correlation parameter $\rho$.
We end this study by commenting that our theoretical result in \eqref{eq:rho} offers insight on $\eta > 1$ as shown in the empirical findings of \cite{meyer2005relative}. 
To see this, let us assume for a moment that $\rho >0$. Then as $\rho$ increases, the overall market improves for  the insurer and naturally we expect all the insurer's strategies $\pi_c^*$, $\kappa_c^*$, and $\xi_c^*$ to increase at the same time; while \eqref{eq:rho} shows that $\xi_c^*$ increases only if $\eta > 1$. 

As seen in the above analysis on the correlation coefficient $\rho$, the insurer's (relative) risk aversion parameter $\eta$  is decisive in the comparative statics of the optimal dividend strategy. 
We next analyze how risk aversion $\eta$ affects the insurer's optimal strategy. 
To that end, we obtain from \eqref{eq:cor_op} that
\begin{align}
	\label{eq:eta}
	\frac{\partial \pi_c^*}{\partial \eta} =  - \frac{1}{\eta}\, \pi_c^*, 
	\qquad 
	\frac{\partial \kappa_c^*}{\partial \eta} = - \frac{1}{\eta}\, \kappa_c^* <0,
	\qquad 
	\frac{\partial \xi_c^*}{\partial \eta} = - \frac{1}{\eta} \, \xi_c^* + \frac{\hat{\mathfrak{g}}^* + (\eta - 1)r}{\eta^2}.
\end{align}
Due to \eqref{eq:cor_cond}, $\kappa_c^*>0$ and that implies a higher risk aversion always leads to a reduction of underwriting in the insurance business. 
Since $\text{Sign} (\partial \pi_c^* / \partial \eta) = - \, \text{Sign}(\pi_c^*)$, when the optimal investment strategy is a ``buy'' strategy (resp. a ``short-sell'' strategy), the investment proportion in the risky asset reduces (resp. increases) as the risk aversion $\eta$ increases. 
Notice that when $\pi_c^*<0$, an increase in $\pi_c^*$ means the absolute value of $\pi_c^*$ decreases. 
Therefore, given an increase in the risk aversion $\eta$, the insurer always invests less proportion in absolute values in the risky asset (i.e., less risk taking).
Such a result is clearly consistent with the economic meaning of risk aversion.
Despite the analytic result of $\partial \xi_c^* / \partial \eta$ in \eqref{eq:eta}, how risk aversion affects the dividend strategy is still unclear, since the first term is always negative but the second term is always positive (note $\hat{\mathfrak{g}}^* - r >0$ by \eqref{eq:g-hat}).

In the third analysis, we focus on the impact of the financial market on the optimal strategy. 
Introduce the excess return $\bar \mu$ and the excess premium $\bar p$ by
\begin{align}
	\label{eq:mu-bar}
\bar \mu := \mu - r \qquad \text{ and } \qquad \bar p := p - \alpha,
\end{align}
where $\bar \mu>0$ and $\bar p>0$ by \eqref{eq:assu}.
To see how the excess return $\bar \mu$ affects the insurer's decision, we compute 
\begin{align}
\label{eq:pi}
	\frac{\partial \pi_c^*}{\partial \bar \mu} =  \frac{1}{\eta \sig^2  ( 1 - \rho^2)} > 0, 
	\qquad 
	\frac{\partial \kappa_c^*}{\partial \bar \mu} = \frac{\rho}{\eta \beta \sig  (1 - \rho^2)},
	\qquad 
	\frac{\partial \xi_c^*}{\partial \bar \mu} = - \frac{1-\eta}{\eta^2} \, \frac{\rho   \bar p + \beta \Lam}{\beta \sig (1 - \rho^2)}.
\end{align}
By \eqref{eq:pi}, following an increase in $\bar \mu$, the insurer always invests more in the risky asset but only underwrites more (resp. less) insurance policies if $\rho > 0$ (resp. $\rho <0$). 
These findings can be explained using the discussions of $\rho$  in the above study and the ``fact'' that the higher the excess return $\bar \mu$, the more attractive the risky asset. 
Note that in our framework the insurer is exposed to both the investment risk from the risky asset $S$ and the insurable risk $R$. But according to \eqref{eq:pi}, $\partial \pi_c^* / \partial \bar \mu >0$ always holds, regardless of the sign of $\rho$. 
That is, even though we allow a non-zero correlation $\rho$ in our model, the result $\partial \pi_c^* / \partial \bar \mu >0$ is still in line with the standard optimal investment literature; see \cite{merton1969lifetime, merton1971optimum} and \cite{herdegen2020elementary}.
We mention that $\rho$ impacts the sensitivity magnitude of $\pi_c^*$ with respect to $\bar \mu$: the stronger the correlation, the more sensitive $\pi_c^*$ is to the changes of  $\bar \mu$.
Assuming $\rho  \bar p + \beta \Lam >0$,\footnote{A sufficient condition for this inequality is $\rho \ge 0$, as $\bar p, \beta, \Lam >0$.}  as $\bar \mu$ increases, insurers with low risk aversion ($0<\eta <1$) reduce the dividend rate but  insurers with high risk aversion ($\eta > 1$) reacts exactly in the opposite way.
But, if $\rho  \bar p + \beta \Lam < 0$,\footnote{Equivalently, $\rho<0$ and $\bar p > - \beta \Lam /\rho >0$. The economic meaning of this condition is that, in an ``adverse" market (since correlation is negative), the insurer sets the net insurance premium above a threshold.} the converse of the above statement holds true.
Another key parameter in the financial market is the volatility $\sig$ of the risky asset, and the related sensitivity results are obtained by 
\begin{align}
	\label{eq:sig}
	\frac{\partial \pi_c^*}{\partial \sig} = -  \frac{(2 - \rho^2) \Lam  }{\eta \sig^2 (1 - \rho^2)} - \frac{\rho \beta}{\sig^2} \, \kappa_c^*, 
	\qquad 
	\frac{\partial \kappa_c^*}{\partial \sig} = - \frac{\rho \bar \mu}{\eta \beta \sig^2  (1 - \rho^2)},
	\qquad 
	\frac{\partial \xi_c^*}{\partial \sig} = \frac{1-\eta}{\eta^2} \, \frac{\bar \mu}{\sig^2}  \, \frac{\rho \bar p + \beta \Lam}{\beta (1 - \rho^2)}.
\end{align}
If we set $\rho =0$ in \eqref{eq:sig}, we have $\partial \pi_c^* / \partial \sig = - 2\Lam /(\eta \sig^2) <0$, but such a result is \emph{not} expected in general, which is different from the standard literature; see \cite{merton1969lifetime, merton1971optimum}.
From \eqref{eq:sig}, we observe that $\partial \pi_c^* / \partial \sig <0 $ whenever $\rho \ge 0$ 
and  $\text{Sign}(\partial \kappa_c^* / \partial \sig) = - \, \text{Sign}(\rho)$.
Consequently, assuming the two markets are positively correlated, when the volatility $\sig$ increases, the insurer invests less in the risky asset and reduces insurance liabilities. 
Under a negatively correlated condition, the insurer's reaction to an increase of the volatility $\sig$ is to underwrite more policies.
However, given $\rho<0$, how the insurer should adjust her investment strategy to an increase of $\sig$ is not easily seen, and will be investigated numerically in the next subsection. 
Comparing \eqref{eq:mu-bar} with \eqref{eq:sig} shows that $\partial \xi_c^* / \partial \sig = - \Lam \, \partial \xi_c^* / \partial \bar \mu$, and thus the opposite side of the previous discussions on $\partial \xi_c^* / \partial \bar \mu$ applies here.

Our last agenda is to study the impact of the insurance market, the drift parameter $\alpha$ and the diffusion parameter $\beta$ in the risk process \eqref{eq:dR}, on the insurer's optimal strategy.
We first focus on the optimal liability strategy $\kappa_c^*$ and obtain 
\begin{align}
\label{eq:sen-kappa}
\frac{\partial \kappa_c^*}{\partial \alpha} = - \frac{1}{\eta \beta^2 (1 - \rho^2)} <0 \qquad \text{ and } \qquad \frac{\partial \kappa_c^*}{\partial \beta} = - \frac{2 \bar p}{\eta \beta^3 (1 - \rho^2)}<0.
\end{align}
The sensitivity results in \eqref{eq:sen-kappa} fit our intuition perfectly: when $\alpha$ or $\beta$ increases, the insurable risk $R$ becomes more risky  and the insurer should reduce underwriting in response. Regarding the optimal investment strategy $\pi_c^*$, we have 
\begin{align}
	\label{eq:sen-pi}
	\frac{\partial \pi_c^*}{\partial \alpha} = \frac{\rho \beta }{\sig} \, \frac{\partial \kappa_c^*}{\partial \alpha} \qquad \text{ and } \qquad \frac{\partial \pi_c^*}{\partial \beta} =\frac{\rho \beta }{\sig} \, \left(\frac{\kappa_c^*}{\beta} + \frac{\partial \kappa_c^*}{\partial \beta} \right) .
\end{align}
An immediate result is that $\text{Sign}(\partial \pi_c^* / \partial \alpha) = \text{Sign}(\rho)$, which is consistent with the preceding analysis on $\rho$. 
But the story of $\partial \pi_c^* / \partial \beta$ in \eqref{eq:sen-pi} is more complex; it not only depends on the correlation $\rho$ but also $\kappa_c^*$ (positive by \eqref{eq:cor_cond}) and its derivative with respect to $\beta$ (negative by \eqref{eq:sen-kappa}). 
Last, we analyze the optimal dividend strategy $\xi_c^*$ and obtain 
\begin{align}
	\label{eq:sen-xi}
\frac{\partial \xi_c^*}{\partial \alpha} = \frac{1 - \eta}{\eta} \, \kappa_c^* \qquad \text{ and } \qquad \frac{\partial \xi_c^*}{\partial \beta} = \frac{1 - \eta}{\eta^2} \,  \frac{\bar{p}^2 + \beta \rho \bar p \Lam}{\beta^3 (1 - \rho^2)} .
\end{align}
From \eqref{eq:sen-xi}, we easily see that $\text{Sign}(\partial \xi_c^* / \partial \alpha) = \text{Sign}(1-\eta)$ and  $\text{Sign}(\partial \xi_c^* / \partial \beta) = \text{Sign}(1-\eta)$ if $\bar p > \max\{0, -\beta \rho \Lam\}$. 
Now if we assume $\eta > 1$ and $p > \max\{0, -\beta \rho \Lam\}$, both derivatives in \eqref{eq:sen-xi} are negative, implying that the insurer should reduce the optimal dividend payment rate when the insurance business becomes more risky.
This finding may further support that $\eta > 1$ is a more reasonable scenario.

\subsection{Numerical Results} 
\label{sub:nume}

In Section \ref{sub:analytic}, we analyze the sensitivity of the optimal strategy with respect to various model parameters \emph{analytically}. 
Here, in this subsection, we continue the same analysis, but from a numerical point of view.
In particular, we focus on the effect of two important parameters: the correlation coefficient $\rho$ and the relative risk aversion $\eta$.

\begin{table}[h!]
	\centering
	\caption{Default Model Parameters}
	\label{tab:para}
	\vspace{1ex}
	\begin{tabular}{cccccccc} \hline
		Parameters & $r$ & $\mu$ & $\sig$ & $\alpha$ & $\beta$ & $p$  & $\delta$\\ \hline
		Values  & 0.01 & 0.05 & 0.25 & 0.1 & 0.1 & 0.15  & 0.15\\ \hline
	\end{tabular}
	
	\vspace{1ex}
	{\footnotesize Note. $\delta$ is the subjective discount factor, $p$ is the premium rate, and all other parameters are from the model \eqref{eq:dS} and \eqref{eq:dR}.}
\end{table}

Since our sensitivity analysis is \emph{qualitative}, we set the base values for the model parameters of \eqref{eq:dS}-\eqref{eq:dR} in Table \ref{tab:para} and allow \emph{one} parameter to vary over a reasonable range in each study.
We compute the insurer's optimal investment, liability ratio, and dividend rate strategies $u^*=(\pi^*, \kappa^*, \xi^*)$, where $u^* = u^*_c$ in \eqref{eq:cor_op}, when the correlation coefficient $\rho$ varies from -0.8 to 0.8. 
We plot the results under four different risk aversion levels ($\eta = 0.8,1,2,5$) in Figure \ref{fig:rho}.
Our main findings are summarized as follows:

\begin{itemize}
	\item (Optimal investment $\pi^*$ in the upper panel of Figure \ref{fig:rho}) 
	
	We observe that $\pi^*$ is an increasing function of $\rho$, which verifies the result in \eqref{eq:rho}.  
	In consequence, the insurer invests more in the risky asset as $\rho$ increases.
	There exists a \emph{negative} threshold $\hat \rho$\footnote{We use $\hat \rho$ to denote a genetic threshold constant of $\rho$, which may differ content by content.} ($\hat \rho$ is about -0.32 in the numerical example), below which the optimal investment $\pi^*$ is negative, i.e., short selling is optimal when $\rho < \hat \rho$. 
	We also notice that the absolute investment weight $|\pi^*|$ in the risky asset is a decreasing function of the relative risk aversion $\eta$. 
	Economically, that means insurers with higher relative risk aversion invest more conservatively than those with lower relative risk aversion, which is obviously consistent with the definition of risk aversion. 
	
	\item (Optimal liability ratio $\kappa^*$ in the middle panel of Figure \ref{fig:rho}) 
	
	$\kappa^*$ is \emph{not} a monotone function of $\rho$ globally. Instead, there exists a negative threshold $\hat \rho$ ($\hat \rho$ is around -0.15 in the numerical example) such that $\kappa^*$ is increasing over $(\hat \rho,1)$ and decreasing over $(-1, \hat \rho)$. 
	In addition, the increasing part is ``steeper" than the decreasing part, i.e., the same increment of $\rho$ over $(\hat \rho,1)$ has a bigger impact on $\kappa^*$ than the one over $(-1, \hat \rho)$. 
	On the other hand, we observe that $\kappa^*$ is a decreasing function of the risk aversion parameter $\eta$. Therefore, an increase in risk aversion always leads to a decrease in the optimal liability ratio $\kappa^*$.
	
	\item (Optimal dividend rate $\xi^*$ in the bottom panel of Figure \ref{fig:rho}) 
	
	An immediate and also important observation is that the shape of $\xi^*$ differs dramatically over different regions of $\eta$. 
	When $\eta = 1$ (corresponding to log utility), the optimal dividend rate $\xi^*$ is a flat line, equal to $\delta$ ($\delta =0.15$ from Table \ref{tab:para}).
	When $0<\eta < 1$, $\xi^*$ is a concave function of $\rho$, increasing first from -1 to a negative threshold (around -0.25 in Figure \ref{fig:rho}) and decreasing afterwards. 
	When $\eta > 1$, $\xi^*$ is a convex function of $\rho$, decreasing first and then increasing. 
	Furthermore, when $\eta$ increases over $(1, \infty)$, the insurer reduces the optimal dividend rate $\xi^*$. 
\end{itemize}

\begin{figure}
	\begin{center}
		\caption{Impact of the Correlation Coefficient $\rho$ and Risk Aversion $\eta$ on the Optimal Strategies}
		\label{fig:rho}
		\vspace{1ex}
		
		\includegraphics[width = 0.7 \textwidth, height= 0.3\textheight]{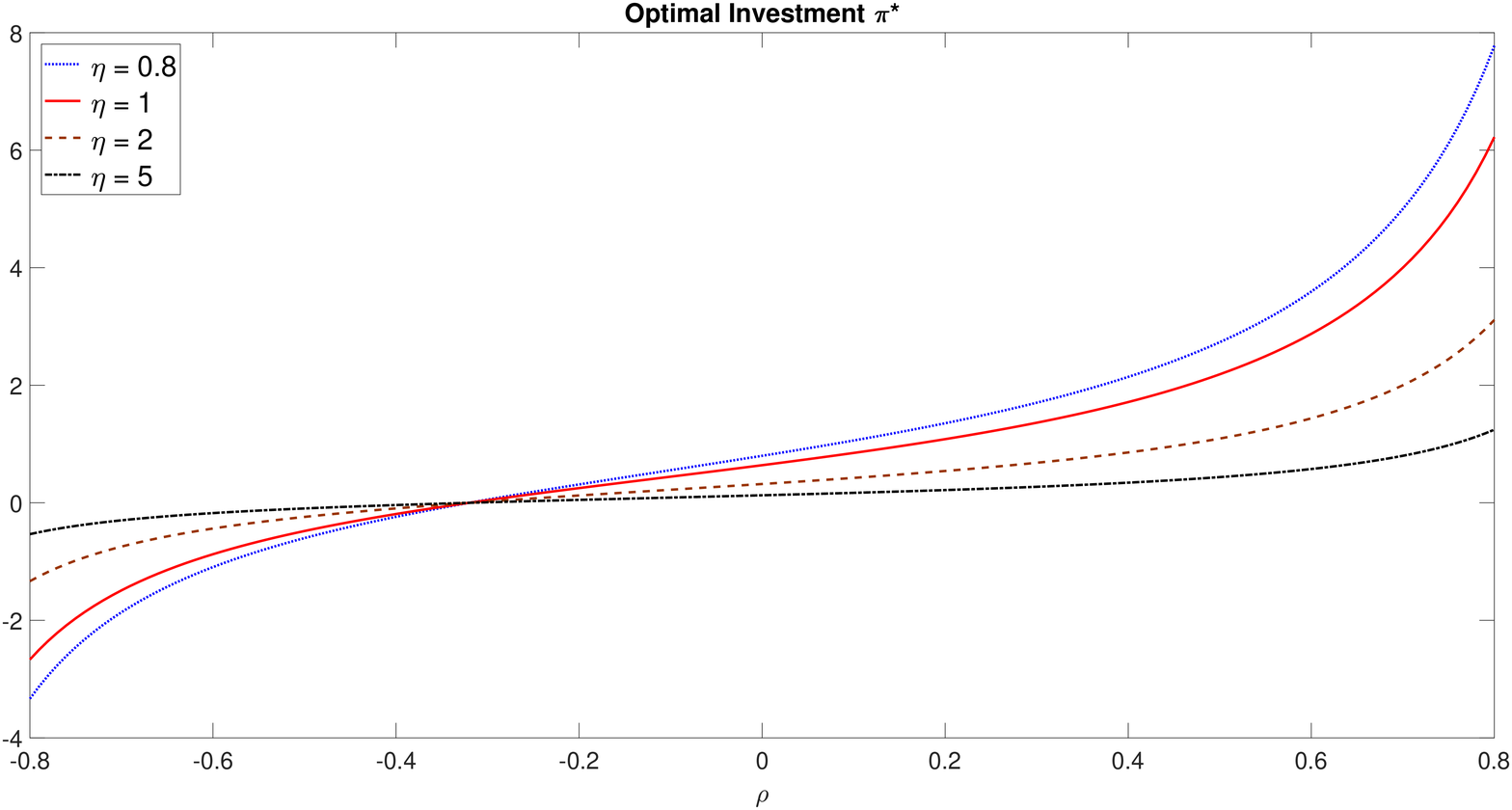}\\
		\includegraphics[width = 0.7 \textwidth, height= 0.3\textheight]{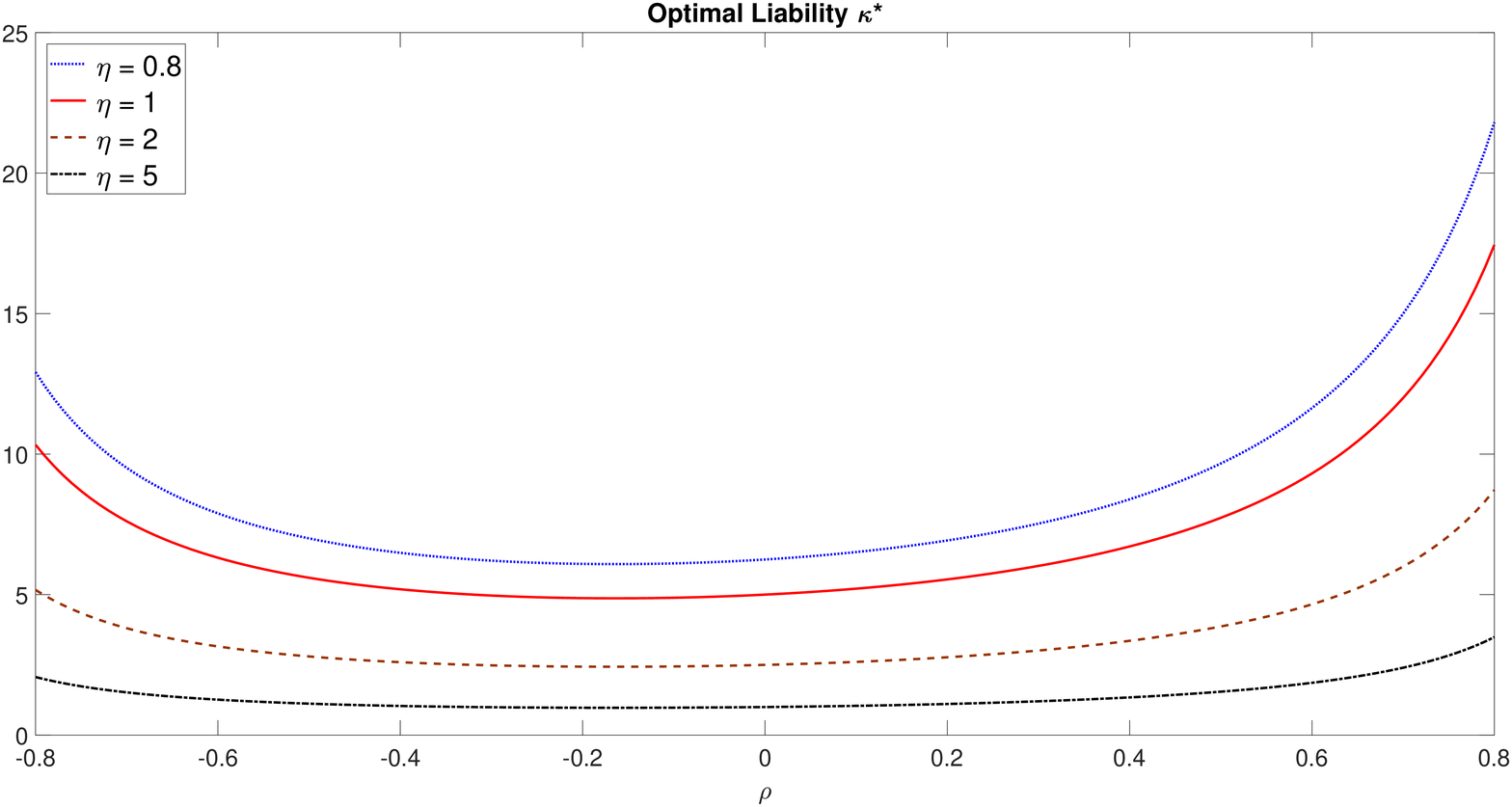}\\
		\includegraphics[width = 0.7 \textwidth, height= 0.3\textheight]{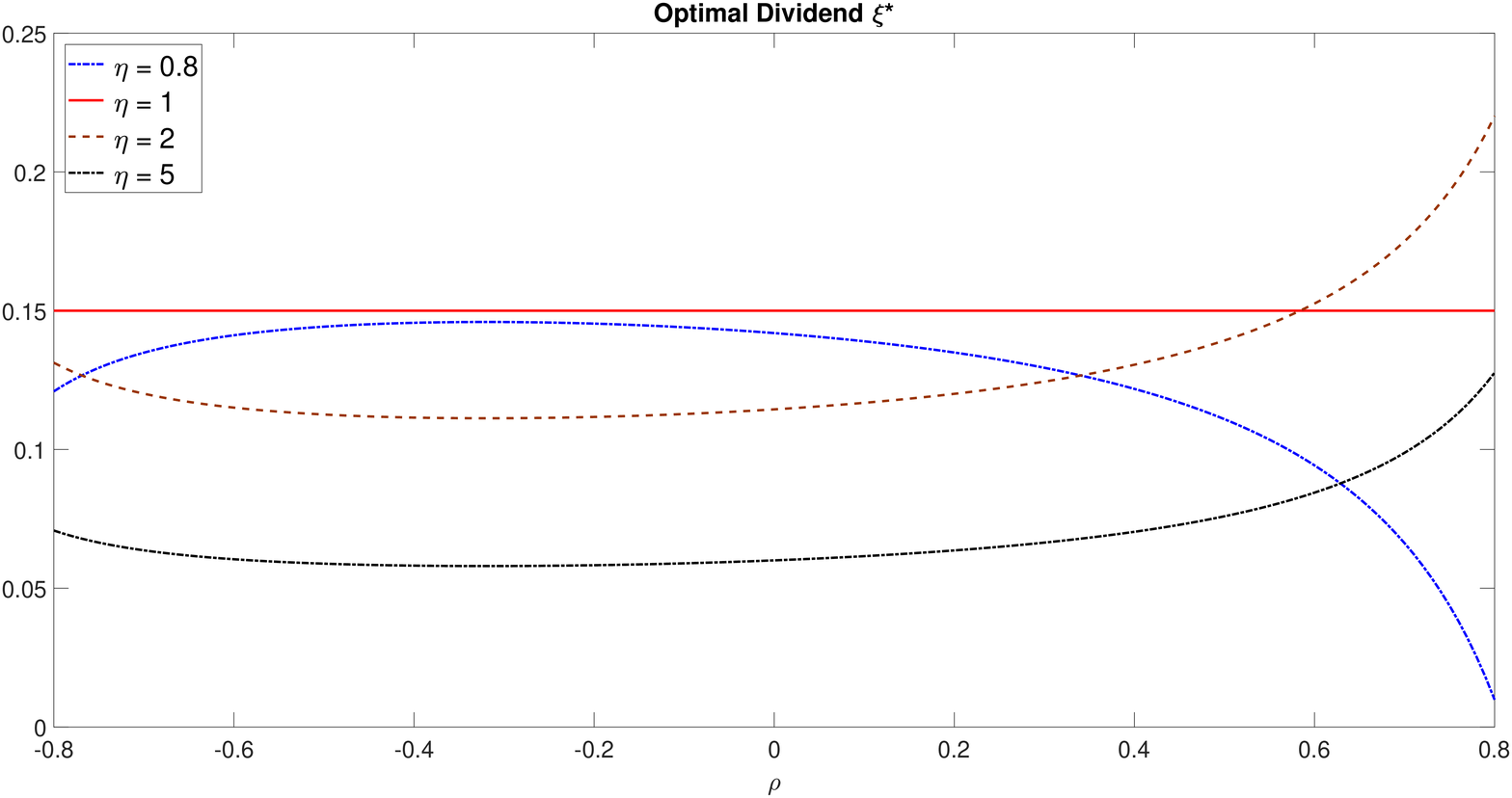}
	\end{center}
	\vspace{-3ex}
	{\footnotesize Note. We plot the insurer's optimal investment $\pi^*$ (upper panel), liability $\kappa^*$ (middle panel), and dividend $\xi^*$ (lower), as a function of the correlation coefficient $\rho$ over $(-0.8, 0.8)$, under four different risk aversion levels $\eta = 0.8$ (dotted blue), $\eta = 1$ (solid red), $\eta = 2$ (dashed brown), and $\eta = 5$ (dash-dot black). The parameters are chosen from Table \ref{tab:para}.}
\end{figure}

Our sensitivity analysis so far is based on the assumption that there are no jumps in the risk process $R$  \eqref{eq:dR}. 
In what follows, we relax this assumption and numerically study how the jump intensity $\lam$ affects the insurer's optimal strategy $u^*$, which is now given by \eqref{eq:pow_cont_op} in Proposition \ref{prop:pow_cont}. 
Due to the inclusion of jumps, we now set the premium rate by the expected value principle: $p = (1+\theta) \times (\alpha + \lam \gam)$, where $\theta = 50\%$.\footnote{Such a rule on premium guarantees that the assumptions in \eqref{eq:psi} hold true. Note that the choice of $p = 0.15$ in Table \ref{tab:para}  is equivalent to $p= (1 + 50\%) \times (\alpha + \lam \gam)$, with $\lam =0$ there.}
We further assume the insurer's risk aversion is $\eta = 2>1$ and fix the jump size $\gam = 0.3$ (3 times the diffusion parameter $\beta = 0.1$).   
We  plot the insurer's optimal investment $\pi^*$, liability ratio $\kappa^*$, and dividend rate $\xi^*$ as a function of $\lam$ over $(0.01,0.2)$ in Figure \ref{fig:lam}. 
The key results are summarized as follows:

	\begin{figure}
	\begin{center}
		\caption{Impact of the Jump Intensity $\lambda$ on the Optimal Strategies}
		\label{fig:lam}
		\vspace{1ex}
		
		\includegraphics[width = 0.7 \textwidth, height= 0.3\textheight]{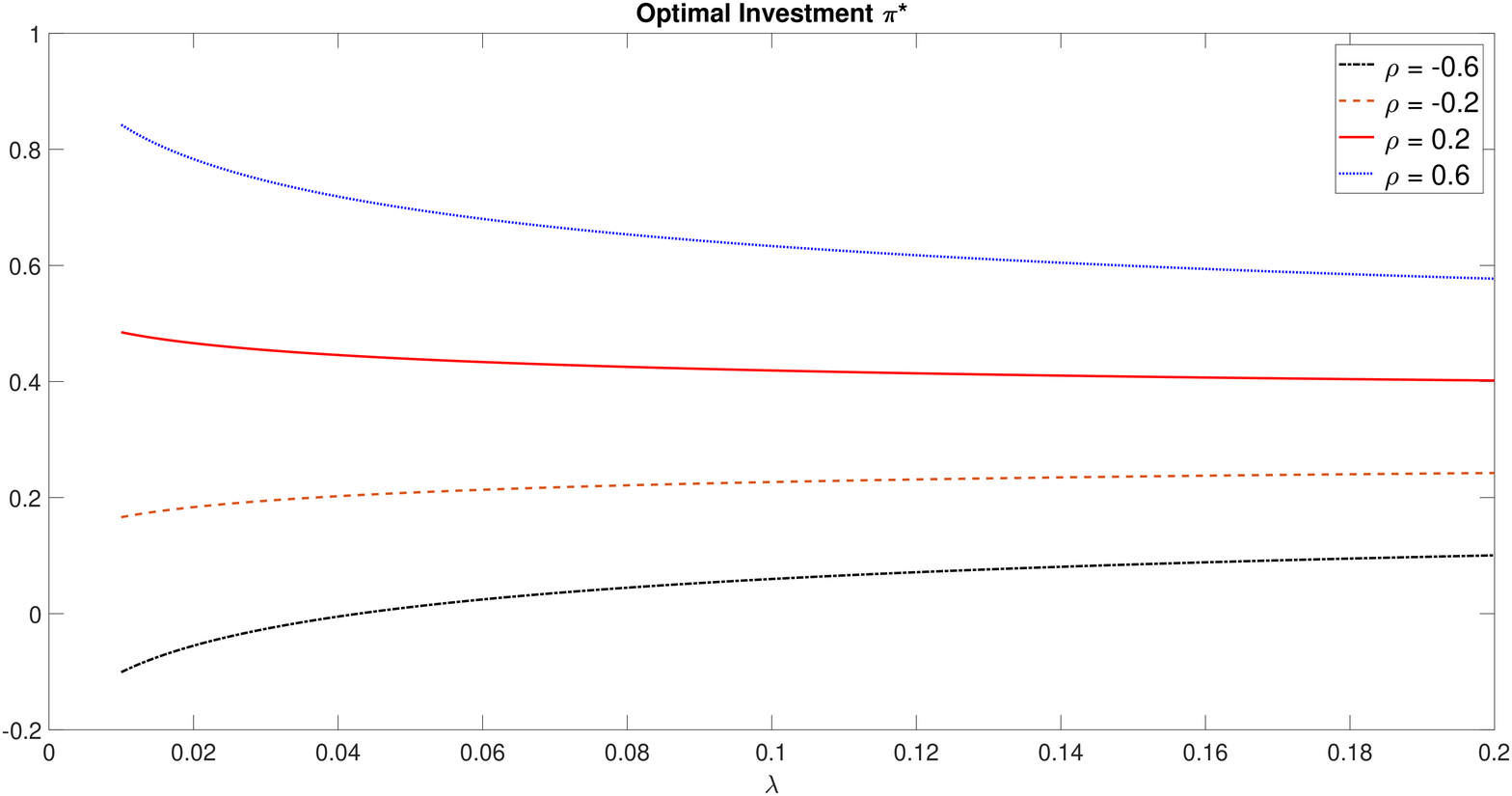}\\
		\includegraphics[width = 0.7 \textwidth, height= 0.3\textheight]{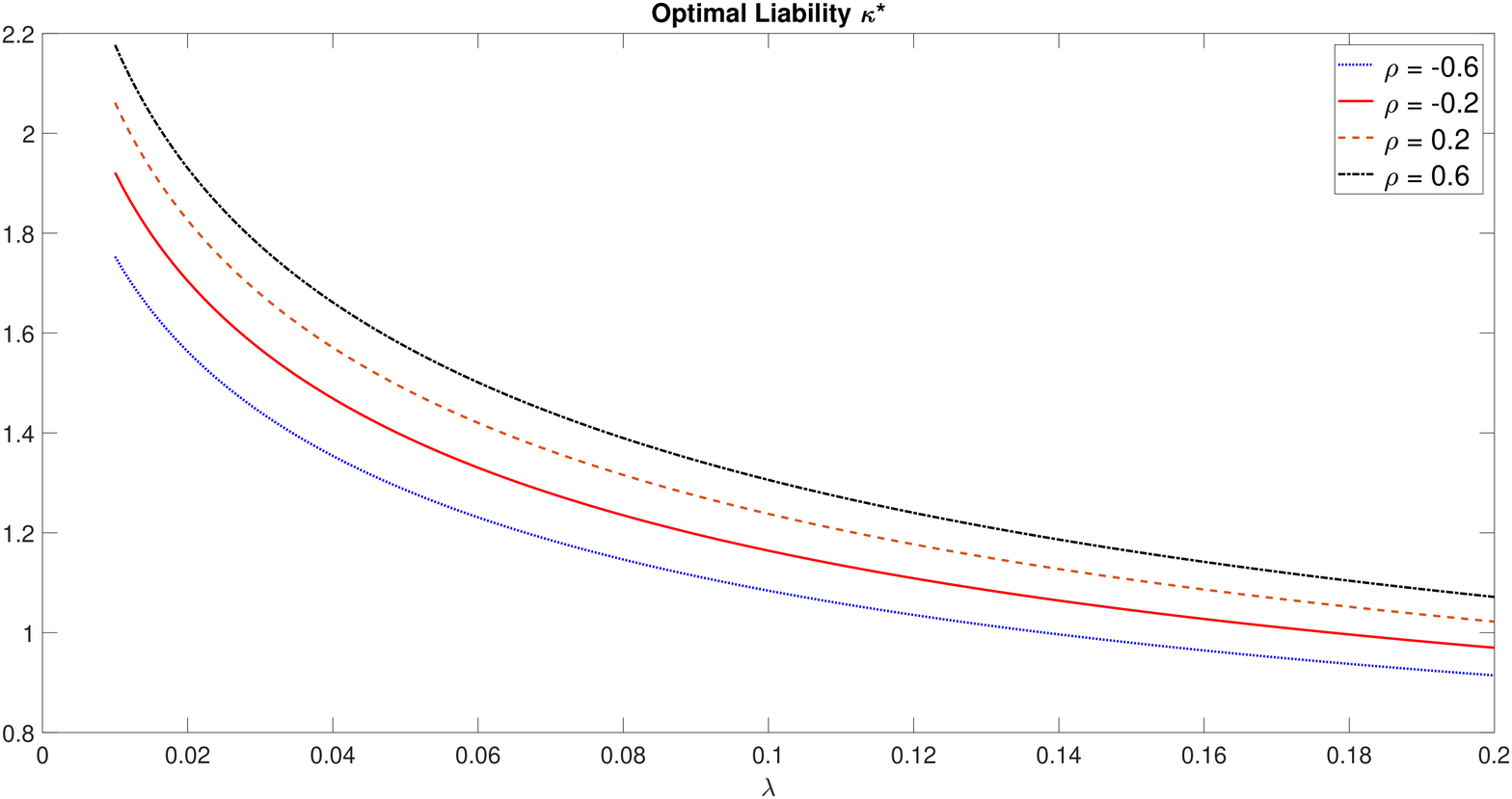}\\
		\includegraphics[width = 0.7 \textwidth, height= 0.3\textheight]{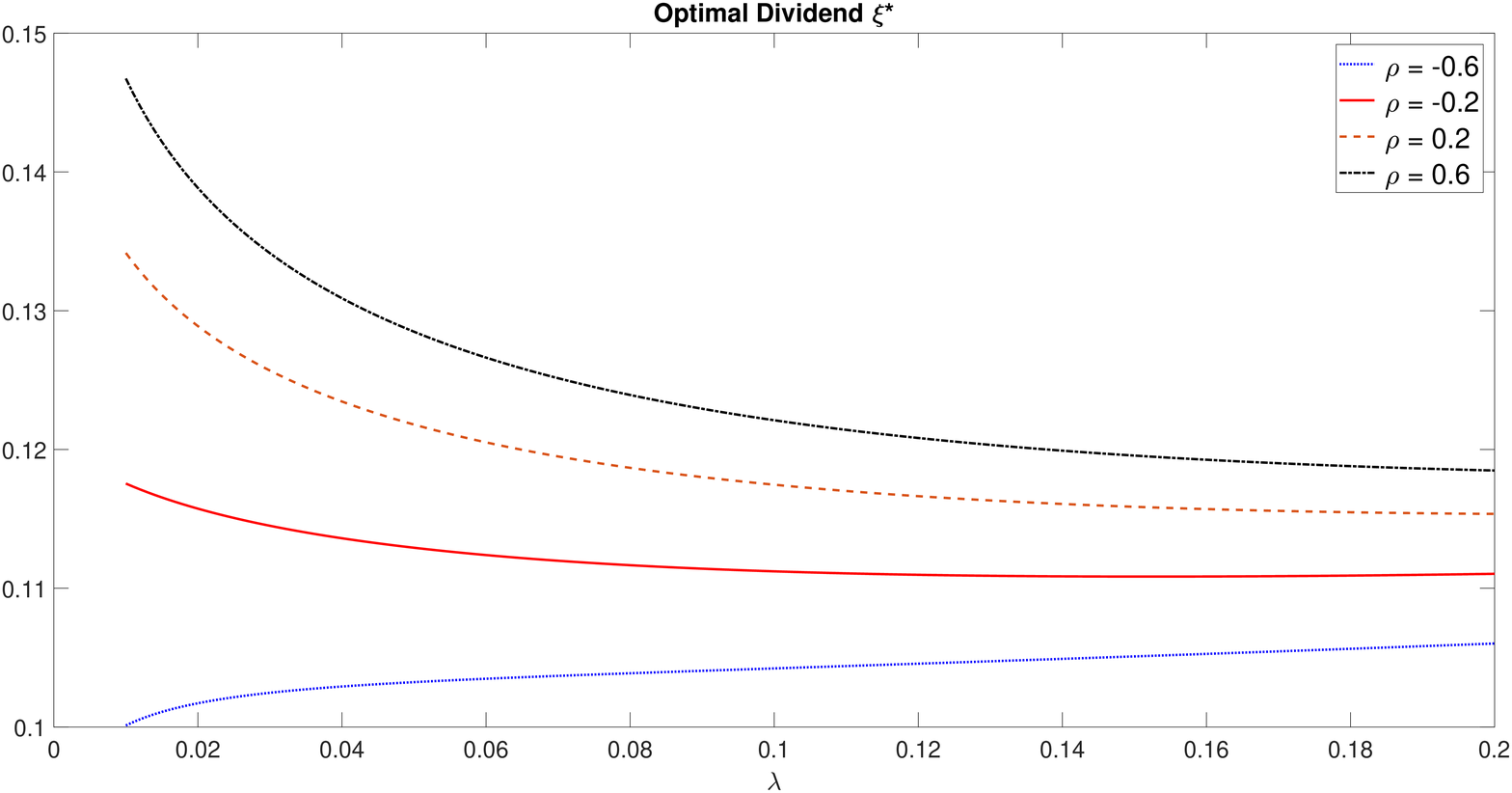}
	\end{center}
	\vspace{-3ex}
	{\footnotesize Note. We plot the insurer's optimal investment $\pi^*$ (upper panel), liability $\kappa^*$ (middle panel), and dividend $\xi^*$ (lower), as a function of the jump intensity $\lam$ over $(0.01, 0.2)$, under four different correlation levels $\rho = -0.6$ (dotted blue), $\rho = -0.2$ (solid red), $\rho = 0.2$ (dashed brown), and $\rho = 0.6$ (dash-dot black). We set $p=1.5(\alpha+\lam \gam)$, $\eta = 2$,  $\gamma = 0.3$, and the rest by Table \ref{tab:para}.}
\end{figure}

\begin{itemize}
	\item (Optimal investment $\pi^*$ in the upper panel of Figure \ref{fig:lam}) 
	
	As anticipated, the results are divided into two regions of the correlation coefficient $\rho$ by a negative threshold $\hat \rho$.
	When $\rho$ is greater than $\hat \rho$, the optimal investment strategy $\pi^*$ is a decreasing function of the jump intensity  $\lam$, and thus the insurer invests less in the risky asset as $\lam$ increases. 
	When $\rho$ is less than $\hat \rho$, $\pi^*$ becomes an increasing function of $\lam$. 
	In particular, for a very negative $\rho$ (e.g., $\rho = -0.6$ in Figure \ref{fig:lam}), $\pi^*$ may even change from a negative value to a positive value when $\lam$ increases. 
	These results can be explained by the expression of $\pi^*$ in \eqref{eq:pow_cont_op}, which has a positive myopic term, independent of $\lam$, and a hedging term with the same sign as $\rho$, which depends on $\lam$ through $\kappa^*$.
	Last, Figure \ref{fig:lam} also confirms the finding from Figure \ref{fig:rho} that $\pi^*$ is an increasing function of $\rho$.
	
	\item (Optimal liability ratio $\kappa^*$ in the middle panel of Figure \ref{fig:lam}) 
	
	We observe that $\kappa^*$ is a decreasing function of $\lam$. 
	The economic explanation for such a result is that, as $\lam$ increases, the insurable risk becomes more risky, and the insurer's rational response is to reduce underwriting. 
	Another observation from the numerical example is that $\kappa^*$ increases as the correlation $\rho$ increases.

	\item (Optimal dividend rate $\xi^*$ in the bottom panel of Figure \ref{fig:lam}) 
	
	%By \eqref{eq:psi} and \eqref{eq:pow_cont_op}, the insurer's optimal dividend strategy $\xi^*$ depends on the optimal investment strategy $\pi^*$ and the optimal liability ratio strategy $\kappa^*$. 
	%As seen above, the impact of $\lam$ on $\kappa^*$ is known (a negative relation), but the impact of $\lam$ on $\pi^*$ is contingent on the value of $\rho$. 
	%Consequently, how $\lam$ affects $\xi^*$ also depends on the value of $\rho$. 
	Similar to the result of $\pi^*$, how the jump intensity $\lam$ affects the optimal dividend strategy $\xi^*$ also depends on the value of $\rho$.
	When $\rho$ is greater than a negative threshold $\hat \rho$, a negative relation between $\xi^*$ and $\lam$ is shown by Figure \ref{fig:lam}. 
	However, for a very negative $\rho$ (e.g., $\rho = -0.6$ in Figure \ref{fig:lam}), it is possible that $\xi^*$ goes up as $\lam$ increases.  
	Following the discussions on $\rho$ in Section \ref{sub:analytic}, let us interpret $\rho > \hat \rho$ as a ``preferable'' market condition and $\rho < \hat \rho$ an ``adverse'' market condition. 
	In a preferable market, although an increase in $\lam$ makes the insurance business more risky, the insurer still wants to maintain the surplus at a relatively high level (since the market is preferable), and to achieve so, the insurer reduces the optimal dividend rate. 
	%To 
	On the contrary, in an adverse market, the insurer pays dividend at a higher rate when $\lam$ increases, that is because keeping a large surplus is no longer attractive when the available business opportunities are unfavorable. 
	In other words, when the market is in adverse condition, the ``utility'' of cashing out dividends now outweighs the ``utility'' of saving for business opportunities in future.
\end{itemize}

\section{Conclusions}
\label{sec:con}

We introduce a combined financial and insurance market consisting of a risk-free asset, a risky asset, 
% with price given by a geometric Brownian motion, 
and a risk process $R$.
The risk process $R$ is modeled by a jump-diffusion process, possibly correlated with the risky asset, and represents the amount of liabilities per policy or per unit in underwriting.
A representative insurer controls simultaneously the investment weight in the risky asset, the liability ratio in underwriting, and the dividend payout rate to shareholders. 
The insurer seeks an optimal triplet strategy to maximize her expected discounted utility of dividend over an infinite horizon. 
We first solve such an optimal dividend problem for \emph{constant} strategies and then apply a perturbation approach to show that the optimal constant strategy remains optimal over all admissible strategies. 
When the insurer's utility is given by a log or a power utility function, we obtain the optimal strategy and the value function in closed form. 
We further utilize the explicit results to conduct economic analyses, both analytically and numerically, to study the impact of various market parameters and risk aversion on the insurer's optimal strategy.     

%\section*{Acknowledgments}
%The research of Z.J. is partially supported by the Research Grants Council of the Hong Kong Special Administrative Region (No. 17330816).
%The research of Z.Q.X.  is partially supported by the National Natural Science Foundation of China  (No. 11971409) and Hong Kong GRF (No. 15204216 and No. 15202817). 
%B.Z. acknowledges a start-up grant  from the University of Connecticut.

\section*{Disclosure statement}
No potential conflict of interest was reported by the author(s).

\bibliographystyle{apalike}
\bibliography{reference}

\begin{thebibliography}{}

\bibitem[Albrecher et~al., 2005]{AlbrecherHT05}
Albrecher, H., Hartinger, J., and Tichy, R.~F. (2005).
\newblock On the distribution of dividend payments and the discounted penalty
  function in a risk model with linear dividend barrier.
\newblock {\em Scandinavian Actuarial Journal}, 2005(2):103--126.

\bibitem[Albrecher and Ivanovs, 2018]{AlbrecherI18}
Albrecher, H. and Ivanovs, J. (2018).
\newblock On the distribution of dividend payments and the discounted penalty
  function in a risk model with linear dividend barrier.
\newblock {\em Scandinavian Actuarial Journal}, 2018(1):76--83.

\bibitem[Albrecher and Thonhauser, 2009]{albrecher2009optimality}
Albrecher, H. and Thonhauser, S. (2009).
\newblock Optimality results for dividend problems in insurance.
\newblock {\em RACSAM-Revista de la Real Academia de Ciencias Exactas,
  F\'isicas y Naturales. Serie A. Matem\'aticas}, 103(2):295--320.

\bibitem[Asmussen et~al., 2000]{asmussen2000optimal}
Asmussen, S., H{\o}jgaard, B., and Taksar, M. (2000).
\newblock Optimal risk control and dividend distribution policies. example of
  excess-of loss reinsurance for an insurance corporation.
\newblock {\em Finance and Stochastics}, 4(3):299--324.

\bibitem[Asmussen and Taksar, 1997]{asmussen1997controlled}
Asmussen, S. and Taksar, M. (1997).
\newblock Controlled diffusion models for optimal dividend pay-out.
\newblock {\em Insurance: Mathematics and Economics}, 20(1):1--15.

\bibitem[Avanzi, 2009]{avanzi2009strategies}
Avanzi, B. (2009).
\newblock Strategies for dividend distribution: A review.
\newblock {\em North American Actuarial Journal}, 13(2):217--251.

\bibitem[Avanzi and Gerber, 2008]{avanzi2008optimal}
Avanzi, B. and Gerber, H.~U. (2008).
\newblock Optimal dividends in the dual model with diffusion.
\newblock {\em ASTIN Bulletin}, 38(2):653--667.

\bibitem[Avanzi et~al., 2007]{avanzi2007optimal}
Avanzi, B., Gerber, H.~U., and Shiu, E.~S. (2007).
\newblock Optimal dividends in the dual model.
\newblock {\em Insurance: Mathematics and Economics}, 41(1):111--123.

\bibitem[Avanzi and Wong, 2012]{avanzi2012mean}
Avanzi, B. and Wong, B. (2012).
\newblock On a mean reverting dividend strategy with brownian motion.
\newblock {\em Insurance: Mathematics and Economics}, 51(2):229--238.

\bibitem[Azcue and Muler, 2005]{azcue2005optimal}
Azcue, P. and Muler, N. (2005).
\newblock Optimal reinsurance and dividend distribution policies in the
  {Cram{\'e}r-Lundberg} model.
\newblock {\em Mathematical Finance}, 15(2):261--308.

\bibitem[Bai and Guo, 2010]{bai2010optimal}
Bai, L. and Guo, J. (2010).
\newblock Optimal dividend payments in the classical risk model when payments
  are subject to both transaction costs and taxes.
\newblock {\em Scandinavian Actuarial Journal}, 2010(1):36--55.

\bibitem[Bayraktar et~al., 2013]{bayraktar2013optimal}
Bayraktar, E., Kyprianou, A.~E., and Yamazaki, K. (2013).
\newblock On optimal dividends in the dual model.
\newblock {\em ASTIN Bulletin}, 43(3):359--372.

\bibitem[Bayraktar et~al., 2014]{bayraktar2014optimal}
Bayraktar, E., Kyprianou, A.~E., and Yamazaki, K. (2014).
\newblock Optimal dividends in the dual model under transaction costs.
\newblock {\em Insurance: Mathematics and Economics}, 54:133--143.

\bibitem[Bernard et~al., 2020]{bernard2020optimal}
Bernard, C., Liu, F., and Vanduffel, S. (2020).
\newblock Optimal insurance in the presence of multiple policyholders.
\newblock {\em Journal of Economic Behavior \& Organization}, forthcoming.

\bibitem[Browne, 1995]{browne1995optimal}
Browne, S. (1995).
\newblock Optimal investment policies for a firm with a random risk process:
  {Exponential} utility and minimizing the probability of ruin.
\newblock {\em Mathematics of Operations Research}, 20(4):937--958.

\bibitem[Cadenillas et~al., 2006]{cadenillas2006classical}
Cadenillas, A., Choulli, T., Taksar, M., and Zhang, L. (2006).
\newblock Classical and impulse stochastic control for the optimization of the
  dividend and risk policies of an insurance firm.
\newblock {\em Mathematical Finance}, 16(1):181--202.

\bibitem[Cadenillas et~al., 2007]{cadenillas2007optimal}
Cadenillas, A., Sarkar, S., and Zapatero, F. (2007).
\newblock Optimal dividend policy with mean-reverting cash reservoir.
\newblock {\em Mathematical Finance}, 17(1):81--109.

\bibitem[Capponi and Figueroa-L{\'o}pez, 2014]{capponi2014dynamic}
Capponi, A. and Figueroa-L{\'o}pez, J.~E. (2014).
\newblock Dynamic portfolio optimization with a defaultable security and
  regime-switching.
\newblock {\em Mathematical Finance}, 24(2):207--249.

\bibitem[Chen et~al., 2014]{chen2014optimal}
Chen, S., Li, Z., and Zeng, Y. (2014).
\newblock Optimal dividend strategies with time-inconsistent preferences.
\newblock {\em Journal of Economic Dynamics and Control}, 46:150--172.

\bibitem[Choulli et~al., 2003]{choulli2003diffusion}
Choulli, T., Taksar, M., and Zhou, X.~Y. (2003).
\newblock A diffusion model for optimal dividend distribution for a company
  with constraints on risk control.
\newblock {\em SIAM Journal on Control and Optimization}, 41(6):1946--1979.

\bibitem[De~Finetti, 1957]{de1957impostazione}
De~Finetti, B. (1957).
\newblock Su un’impostazione alternativa della teoria collettiva del rischio.
\newblock In {\em Transactions of the XVth International Congress of
  Actuaries}, volume~2, pages 433--443. New York.

\bibitem[Fleming and Soner, 2006]{fleming2006controlled}
Fleming, W.~H. and Soner, H.~M. (2006).
\newblock {\em Controlled Markov Processes and Viscosity Solutions}.
\newblock Springer Science \& Business Media.

\bibitem[Gerber and Shiu, 2004]{gerber2004optimal}
Gerber, H.~U. and Shiu, E.~S. (2004).
\newblock Optimal dividends: analysis with brownian motion.
\newblock {\em North American Actuarial Journal}, 8(1):1--20.

\bibitem[Grandits et~al., 2007]{grandits2007optimal}
Grandits, P., Hubalek, F., Schachermayer, W., and Žigo, M. (2007).
\newblock Optimal expected exponential utility of dividend payments in a
  brownian risk model.
\newblock {\em Scandinavian Actuarial Journal}, 2007(2):73--107.

\bibitem[Gu et~al., 2018]{gu2018optimal}
Gu, J.-W., Steffensen, M., and Zheng, H. (2018).
\newblock Optimal dividend strategies of two collaborating businesses in the
  diffusion approximation model.
\newblock {\em Mathematics of Operations Research}, 43(2):377--398.

\bibitem[Herdegen et~al., 2020]{herdegen2020elementary}
Herdegen, M., Hobson, D., and Jerome, J. (2020).
\newblock An elementary approach to the merton problem.
\newblock {\em arXiv preprint arXiv:2006.05260}.

\bibitem[H{\o}jgaard and Taksar, 1998]{hojgaard1998optimal}
H{\o}jgaard, B. and Taksar, M. (1998).
\newblock Optimal proportional reinsurance policies for diffusion models.
\newblock {\em Scandinavian Actuarial Journal}, 1998(2):166--180.

\bibitem[H{\o}jgaard and Taksar, 1999]{hojgaard1999controlling}
H{\o}jgaard, B. and Taksar, M. (1999).
\newblock Controlling risk exposure and dividends payout schemes: insurance
  company example.
\newblock {\em Mathematical Finance}, 9(2):153--182.

\bibitem[Jeanblanc-Picqu{\'e} and Shiryaev, 1995]{jeanblanc1995optimization}
Jeanblanc-Picqu{\'e}, M. and Shiryaev, A.~N. (1995).
\newblock Optimization of the flow of dividends.
\newblock {\em Uspekhi Matematicheskikh Nauk}, 50(2):25--46.

\bibitem[Jin et~al., 2020]{jin2020optimal}
Jin, Z., Liao, H., Yang, Y., and Yu, X. (2020).
\newblock Optimal dividend strategy for an insurance group with contagious
  default risk.
\newblock {\em Scandinavian Actuarial Journal}, forthcoming.

\bibitem[Jin et~al., 2013]{jin2013numerical}
Jin, Z., Yang, H., and Yin, G. (2013).
\newblock Numerical methods for optimal dividend payment and investment
  strategies of regime-switching jump diffusion models with capital injections.
\newblock {\em Automatica}, 49(8):2317--2329.

\bibitem[Jin et~al., 2015]{jin2015optimal}
Jin, Z., Yang, H., and Yin, G. (2015).
\newblock Optimal debt ratio and dividend payment strategies with reinsurance.
\newblock {\em Insurance: Mathematics and Economics}, 64:351--363.

\bibitem[Karatzas et~al., 1986]{karatzas1986explicit}
Karatzas, I., Lehoczky, J.~P., Sethi, S.~P., and Shreve, S.~E. (1986).
\newblock Explicit solution of a general consumption/investment problem.
\newblock {\em Mathematics of Operations Research}, 11(2):261--294.

\bibitem[Karatzas and Shreve, 1998]{karatzas1998methods}
Karatzas, I. and Shreve, S.~E. (1998).
\newblock {\em Methods of Mathematical Finance}.
\newblock Springer.

\bibitem[Kulenko and Schmidli, 2008]{kulenko2008optimal}
Kulenko, N. and Schmidli, H. (2008).
\newblock Optimal dividend strategies in a cram{\'e}r--lundberg model with
  capital injections.
\newblock {\em Insurance: Mathematics and Economics}, 43(2):270--278.

\bibitem[Lapham et~al., 1996]{lapham1996genetic}
Lapham, E.~V., Kozma, C., and Weiss, J.~O. (1996).
\newblock Genetic discrimination: Perspectives of consumers.
\newblock {\em Science}, 274(5287):621--624.

\bibitem[Lindensj{\"o} and Lindskog, 2020]{lindensjo2020optimal}
Lindensj{\"o}, K. and Lindskog, F. (2020).
\newblock Optimal dividends and capital injection under dividend restrictions.
\newblock {\em Mathematical Methods of Operations Research}, pages 1--27.

\bibitem[Merton, 1969]{merton1969lifetime}
Merton, R.~C. (1969).
\newblock Lifetime portfolio selection under uncertainty: The continuous-time
  case.
\newblock {\em Review of Economics and Statistics}, 51(3):247--257.

\bibitem[Merton, 1971]{merton1971optimum}
Merton, R.~C. (1971).
\newblock Optimum consumption and portfolio rules in a continuous-time model.
\newblock {\em Journal of Economic Theory}, 3(4):373--413.

\bibitem[Meyer and Meyer, 2005]{meyer2005relative}
Meyer, D.~J. and Meyer, J. (2005).
\newblock Relative risk aversion: What do we know?
\newblock {\em Journal of Risk and Uncertainty}, 31(3):243--262.

\bibitem[{\O}ksendal and Sulem, 2005]{oksendal2005applied}
{\O}ksendal, B.~K. and Sulem, A. (2005).
\newblock {\em Applied Stochastic Control of Jump Diffusions}.
\newblock Springer.

\bibitem[Schmidli, 2007]{schmidli2007stochastic}
Schmidli, H. (2007).
\newblock {\em Stochastic Control in Insurance}.
\newblock Springer Science \& Business Media.

\bibitem[Schmidli, 2017]{schmidli2017capital}
Schmidli, H. (2017).
\newblock On capital injections and dividends with tax in a diffusion
  approximation.
\newblock {\em Scandinavian Actuarial Journal}, 2017(9):751--760.

\bibitem[Shen and Zou, 2021]{shen2020mean}
Shen, Y. and Zou, B. (2021).
\newblock Mean-varince investment and risk control strategies: {A}
  time-consistant approach via a forward anxiliary process.
\newblock {\em Insurance: Mathematics and Economics}, 97:68--80.

\bibitem[Sotomayor and Cadenillas, 2011]{sotomayor2011classical}
Sotomayor, L.~R. and Cadenillas, A. (2011).
\newblock Classical and singular stochastic control for the optimal dividend
  policy when there is regime switching.
\newblock {\em Insurance: Mathematics and Economics}, 48(3):344--354.

\bibitem[Stein, 2012]{stein2012stochastic}
Stein, J.~L. (2012).
\newblock {\em Stochastic Optimal Control and the US Financial Debt Crisis}.
\newblock Springer.

\bibitem[Taksar and Zhou, 1998]{taksar1998optimal}
Taksar, M.~I. and Zhou, X.~Y. (1998).
\newblock Optimal risk and dividend control for a company with a debt
  liability.
\newblock {\em Insurance: Mathematics and Economics}, 22(1):105--122.

\bibitem[Thonhauser and Albrecher, 2007]{thonhauser2007dividend}
Thonhauser, S. and Albrecher, H. (2007).
\newblock Dividend maximization under consideration of the time value of ruin.
\newblock {\em Insurance: Mathematics and Economics}, 41(1):163--184.

\bibitem[Thonhauser and Albrecher, 2011]{thonhauser2011optimal}
Thonhauser, S. and Albrecher, H. (2011).
\newblock Optimal dividend strategies for a compound poisson process under
  transaction costs and power utility.
\newblock {\em Stochastic Models}, 27(1):120--140.

\bibitem[Xu et~al., 2020]{xu2020optimal}
Xu, L., Yao, D., and Cheng, G. (2020).
\newblock Optimal investment and dividend for an insurer under a markov regime
  switching market with high gain tax.
\newblock {\em Journal of Industrial \& Management Optimization}, 16(1):325.

\bibitem[Yao et~al., 2011]{yao2011optimal}
Yao, D., Yang, H., and Wang, R. (2011).
\newblock Optimal dividend and capital injection problem in the dual model with
  proportional and fixed transaction costs.
\newblock {\em European Journal of Operational Research}, 211(3):568--576.

\bibitem[Zhou and Jin, 2020]{zhou2020optimal}
Zhou, Z. and Jin, Z. (2020).
\newblock Optimal equilibrium barrier strategies for time-inconsistent dividend
  problems in discrete time.
\newblock {\em Insurance: Mathematics and Economics}, 94:100--108.

\bibitem[Zhu, 2013]{zhu2013optimal}
Zhu, J. (2013).
\newblock Optimal dividend control for a generalized risk model with investment
  incomes and debit interest.
\newblock {\em Scandinavian Actuarial Journal}, 2013(2):140--162.

\bibitem[Zhu, 2015]{zhu2015dividend}
Zhu, J. (2015).
\newblock Dividend optimization for general diffusions with restricted dividend
  payment rates.
\newblock {\em Scandinavian Actuarial Journal}, 2015(7):592--615.

\bibitem[Zhu et~al., 2020]{zhu2020singular}
Zhu, J., Siu, T.~K., and Yang, H. (2020).
\newblock Singular dividend optimization for a linear diffusion model with
  time-inconsistent preferences.
\newblock {\em European Journal of Operational Research}, 285(1):66--80.

\bibitem[Zou and Cadenillas, 2014]{zou2014optimal}
Zou, B. and Cadenillas, A. (2014).
\newblock Optimal investment and risk control policies for an insurer: Expected
  utility maximization.
\newblock {\em Insurance: Mathematics and Economics}, 58:57--67.

\end{thebibliography}

\end{document}